\providecommand*{\LayoutLib}{UNDEFINED}
\renewcommand*{\LayoutLib}{DEFINED}
\newcommand*{\TheoremName}{Theorem}
\newcommand*{\LemmaName}{Lemma}
\newcommand*{\CorollaryName}{Corollary}
\newcommand*{\DefinitionName}{Definition}
\newcommand*{\PremiseName}{Premise}
\newcommand*{\RemarkName}{Remark}
\newcommand*{\ExampleName}{Example}
\newcommand*{\ExerciseName}{Exercise}
\newcommand*{\ModelName}{Model}
\newcommand*{\NotationName}{Notation}
\newcommand*{\CaseName}{Case}
\newcommand*{\QuestionName}{Question}
\newcommand*{\AssumptionName}{Assumption}
\newcommand*{\NoteName}{Note}
\newcommand*{\ProblemName}{Problem}
\newcommand*{\SolutionName}{solution}
\theoremstyle{plain}
\newtheorem{theorem}{\TheoremName}
\newtheorem*{theorem*}{\TheoremName}
\newtheorem*{lemma*}{\LemmaName}
\newtheorem{corollary}{\CorollaryName}
\newtheorem*{corollary*}{\CorollaryName}
\theoremstyle{definition}
\newtheorem*{definition*}{\DefinitionName}
\newtheorem*{remark*}{\RemarkName}
\newtheorem*{example*}{\ExampleName}
\newtheorem*{exercise*}{\ExerciseName}
\newtheorem*{model*}{\ModelName}
\newtheorem*{notation*}{\NotationName}
\newtheorem*{premise*}{\PremiseName}
\theoremstyle{remark}
\newtheorem*{case*}{\CaseName}
\newtheorem*{question*}{\QuestionName}
\newtheorem*{assumption*}{\AssumptionName}
\newtheorem*{note*}{\NoteName}
\newtheorem*{problem*}{\ProblemName}
\newtheorem*{solution*}{\SolutionName}
\newcolumntype{P}[2]{>{$\rlap{\rule{#2}{0pt}}}#1<{$}}
\newcolumntype{M}[1]{>{$}#1<{$}}
\newcolumntype{T}{>{\small\tt}l}
\newcolumntype{E}{>{\small}X}
\definecolor{GrayOOO}{gray}{0.9}
\definecolor{GrayOOI}{gray}{0.8}
\definecolor{GrayOIO}{gray}{0.7}
\definecolor{GrayOII}{gray}{0.6}
\definecolor{GrayIOO}{gray}{0.5}
\definecolor{GrayIOI}{gray}{0.4}
\definecolor{GrayIIO}{gray}{0.3}
\definecolor{GrayIII}{gray}{0.2}
\definecolor{RedOOO}{rgb}{0.9,0,0}
\definecolor{RedOOI}{rgb}{0.8,0,0}
\definecolor{RedOIO}{rgb}{0.7,0,0}
\definecolor{RedOII}{rgb}{0.6,0,0}
\definecolor{RedIOO}{rgb}{0.5,0,0}
\definecolor{RedIOI}{rgb}{0.4,0,0}
\definecolor{RedIIO}{rgb}{0.3,0,0}
\definecolor{RedIII}{rgb}{0.2,0,0}
\definecolor{BlueOOO}{rgb}{0.2,0.2,0.9}
\definecolor{BlueOOI}{rgb}{0.2,0.2,0.8}
\definecolor{BlueOIO}{rgb}{0.2,0.2,0.7}
\providecommand*{\MathLib}{UNDEFINED}
\renewcommand*{\MathLib}{DEFINED}
\providecommand*{\Xmath}[1]{\ensuremath{#1}\xspace}
\providecommand*{\XmathThreePar}[3]{%
%  \ifthenelse{\equal{#3}{}}{%
%    \Xmath{#1_{#2}}
%  }{%
    \ifthenelse{\equal{#3}{'}}{%
      \Xmath{#1_{#2}'}
    }{%
      \Xmath{#1_{#2}^{#3}}
    }
%  }
}
\providecommand*{\XmathFourPar}[4]{\XmathThreePar{#1{#2}}{#3}{#4}}
\DeclareMathOperator{\diag}{diag}
\DeclareMathOperator{\rank}{rank}
\DeclareMathAccent{\Ring}{\mathord}{operators}{"17}
\newcommand*{\Tv}[3]{\XmathFourPar{}{#1}{#2}{#3}}
\newcommandtwoopt*{\Tvp}[2][][]{\Tv{p}{#1}{#2}}
\newcommandtwoopt*{\Tvz}[2][][]{\Tv{z}{#1}{#2}}
\newcommand*{\Constant}[3]{\XmathThreePar{#1}{#2}{#3}}
\newcommandtwoopt*{\ConstA}[2][][]{\Constant{a}{#1}{#2}}
\newcommandtwoopt*{\ConstB}[2][][]{\Constant{b}{#1}{#2}}
\newcommandtwoopt*{\ConstC}[2][][]{\Constant{c}{#1}{#2}}
\newcommandtwoopt*{\ConstD}[2][][]{\Constant{d}{#1}{#2}}
\newcommandtwoopt*{\ConstE}[2][][]{\Constant{e}{#1}{#2}}
\newcommandtwoopt*{\ConstK}[2][][]{\Constant{k}{#1}{#2}}
\newcommandtwoopt*{\ConstL}[2][][]{\Constant{l}{#1}{#2}}
\newcommandtwoopt*{\ConstM}[2][][]{\Constant{m}{#1}{#2}}
\newcommandtwoopt*{\ConstP}[2][][]{\Constant{p}{#1}{#2}}
\newcommandtwoopt*{\ConstQ}[2][][]{\Constant{q}{#1}{#2}}
\newcommandtwoopt*{\ConstU}[2][][]{\Constant{u}{#1}{#2}}
\newcommandtwoopt*{\ConstV}[2][][]{\Constant{v}{#1}{#2}}
\newcommandtwoopt*{\ConstW}[2][][]{\Constant{w}{#1}{#2}}
\newcommand*{\SpecialConstant}[3]{\XmathThreePar{\mathrm{#1}}{#2}{#3}}
\newcommandtwoopt*{\SConstE}[2][][]{\SpecialConstant{e}{#1}{#2}}
\newcommandtwoopt*{\Var}[2][][2]{\XmathThreePar{D}{#1}{#2}}
\newcommandtwoopt*{\Varx}[3][][2]{\XmathFourPar{#3}{D}{#1}{#2}}
\newcommandtwoopt*{\Cov}[2][][]{\XmathThreePar{R}{#1}{#2}}
\newcommandtwoopt*{\Covx}[3][][]{\XmathFourPar{#3}{R}{#1}{#2}}
\newcommandtwoopt*{\DenSx}[3][][]{\XmathFourPar{#3}{S}{#1}{#2}}
\newcommandtwoopt*{\DenS}[2][][]{\XmathThreePar{S}{#1}{#2}}
\newcommand*{\OdeSol}[3]{\XmathFourPar{}{#1}{#2}{#3}}
\newcommandtwoopt*{\OdeSolU}[2][][]{\OdeSol{u}{#1}{#2}}
\newcommandtwoopt*{\OdeSolV}[2][][]{\OdeSol{v}{#1}{#2}}
\newcommandtwoopt*{\OdeSolW}[2][][]{\OdeSol{w}{#1}{#2}}
\newcommandtwoopt*{\OdeSolX}[2][][]{\OdeSol{x}{#1}{#2}}
\newcommandtwoopt*{\OdeSolY}[2][][]{\OdeSol{y}{#1}{#2}}
\newcommandtwoopt*{\OdeSolZ}[2][][]{\OdeSol{z}{#1}{#2}}
\newcommand*{\NSet}[3]{\XmathFourPar{\mathbb}{#1}{#2}{#3}}
\newcommandtwoopt*{\A}[2][][]{\NSet{A}{#1}{#2}}
\newcommandtwoopt*{\B}[2][][]{\NSet{B}{#1}{#2}}
\newcommandtwoopt*{\C}[2][][]{\NSet{C}{#1}{#2}}
\newcommandtwoopt*{\F}[2][][]{\NSet{F}{#1}{#2}}
\newcommandtwoopt*{\K}[2][][]{\NSet{K}{#1}{#2}}
\newcommandtwoopt*{\HatK}[2][][]{\NSet{\Hat{K}}{#1}{#2}}
\newcommandtwoopt*{\N}[2][][]{\NSet{N}{#1}{#2}}
\newcommandtwoopt*{\Q}[2][][]{\NSet{Q}{#1}{#2}}
\newcommandtwoopt*{\R}[2][][]{\NSet{R}{#1}{#2}}
\newcommandtwoopt*{\BarR}[2][][]{\NSet{\overline{R}}{#1}{#2}}
\newcommandtwoopt*{\T}[2][][]{\NSet{T}{#1}{#2}}
\newcommandtwoopt*{\U}[2][][]{\NSet{U}{#1}{#2}}
\newcommandtwoopt*{\V}[2][][]{\NSet{V}{#1}{#2}}
\newcommandtwoopt*{\W}[2][][]{\NSet{W}{#1}{#2}}
\newcommandtwoopt*{\Z}[2][][]{\NSet{Z}{#1}{#2}}
\newcommand*{\Sa}[3]{\XmathFourPar{\mathfrak}{#1}{#2}{#3}}
\newcommandtwoopt*{\SaA}[2][][]{\Sa{A}{#1}{#2}}
\newcommandtwoopt*{\SaB}[2][][]{\Sa{B}{#1}{#2}}
\newcommandtwoopt*{\SaF}[2][][]{\Sa{F}{#1}{#2}}
\newcommandtwoopt*{\SaG}[2][][]{\Sa{G}{#1}{#2}}
\newcommandtwoopt*{\SaH}[2][][]{\Sa{H}{#1}{#2}}
\newcommandtwoopt*{\SaX}[2][][]{\Sa{X}{#1}{#2}}
\newcommandtwoopt*{\SaY}[2][][]{\Sa{Y}{#1}{#2}}
\newcommandtwoopt*{\SaZ}[2][][]{\Sa{Z}{#1}{#2}}
\newcommand*{\Rv}[4]{%
  \ifthenelse{\equal{#1}{EMPTY}}{%
    \XmathThreePar{#2}{#3}{#4}
  }{%
    \XmathFourPar{\skew4#1}{#2}{#3}{#4}
  }
}
\newcommandtwoopt*{\RvAx}[3][][]{\Rv{#3}{\alpha}{#1}{#2}}
\newcommandtwoopt*{\RvBx}[3][][]{\Rv{#3}{\beta}{#1}{#2}}
\newcommandtwoopt*{\RvCx}[3][][]{\Rv{#3}{\chi}{#1}{#2}}
\newcommandtwoopt*{\RvDx}[3][][]{\Rv{#3}{\delta}{#1}{#2}}
\newcommandtwoopt*{\RvEx}[3][][]{\Rv{#3}{\epsilon}{#1}{#2}}
\newcommandtwoopt*{\RvFx}[3][][]{\Rv{#3}{\varphi}{#1}{#2}}
\newcommandtwoopt*{\RvFVx}[3][][]{\Rv{#3}{\phi}{#1}{#2}}
\newcommandtwoopt*{\RvGx}[3][][]{\Rv{#3}{\gamma}{#1}{#2}}
\newcommandtwoopt*{\RvHx}[3][][]{\Rv{#3}{\theta}{#1}{#2}}
\newcommandtwoopt*{\RvIx}[3][][]{\Rv{#3}{\iota}{#1}{#2}}
\newcommandtwoopt*{\RvKx}[3][][]{\Rv{#3}{\kapa}{#1}{#2}}
\newcommandtwoopt*{\RvMx}[3][][]{\Rv{#3}{\mu}{#1}{#2}}
\newcommandtwoopt*{\RvNx}[3][][]{\Rv{#3}{\nu}{#1}{#2}}
\newcommandtwoopt*{\RvSx}[3][][]{\Rv{#3}{\varsigma}{#1}{#2}}
\newcommandtwoopt*{\RvTx}[3][][]{\Rv{#3}{\vartheta}{#1}{#2}}
\newcommandtwoopt*{\RvUx}[3][][]{\Rv{#3}{\psi}{#1}{#2}}
\newcommandtwoopt*{\RvVx}[3][][]{\Rv{#3}{\upsilon}{#1}{#2}}
\newcommandtwoopt*{\RvWx}[3][][]{\Rv{#3}{W}{#1}{#2}}
\newcommandtwoopt*{\RvXx}[3][][]{\Rv{#3}{\xi}{#1}{#2}}
\newcommandtwoopt*{\RvYx}[3][][]{\Rv{#3}{\eta}{#1}{#2}}
\newcommandtwoopt*{\RvZx}[3][][]{\Rv{#3}{\zeta}{#1}{#2}}
\newcommandtwoopt*{\RvA}[2][][]{\RvAx[#1][#2]{EMPTY}}
\newcommandtwoopt*{\RvB}[2][][]{\RvBx[#1][#2]{EMPTY}}
\newcommandtwoopt*{\RvC}[2][][]{\RvCx[#1][#2]{EMPTY}}
\newcommandtwoopt*{\RvD}[2][][]{\RvDx[#1][#2]{EMPTY}}
\newcommandtwoopt*{\RvE}[2][][]{\RvEx[#1][#2]{EMPTY}}
\newcommandtwoopt*{\RvF}[2][][]{\RvFx[#1][#2]{EMPTY}}
\newcommandtwoopt*{\RvFV}[2][][]{\RvFVx[#1][#2]{EMPTY}}
\newcommandtwoopt*{\RvG}[2][][]{\RvGx[#1][#2]{EMPTY}}
\newcommandtwoopt*{\RvH}[2][][]{\RvHx[#1][#2]{EMPTY}}
\newcommandtwoopt*{\RvI}[2][][]{\RvIx[#1][#2]{EMPTY}}
\newcommandtwoopt*{\RvK}[2][][]{\RvKx[#1][#2]{EMPTY}}
\newcommandtwoopt*{\RvM}[2][][]{\RvMx[#1][#2]{EMPTY}}
\newcommandtwoopt*{\RvN}[2][][]{\RvNx[#1][#2]{EMPTY}}
\newcommandtwoopt*{\RvS}[2][][]{\RvSx[#1][#2]{EMPTY}}
\newcommandtwoopt*{\RvT}[2][][]{\RvTx[#1][#2]{EMPTY}}
\newcommandtwoopt*{\RvU}[2][][]{\RvUx[#1][#2]{EMPTY}}
\newcommandtwoopt*{\RvV}[2][][]{\RvVx[#1][#2]{EMPTY}}
\newcommandtwoopt*{\RvW}[2][][]{\RvWx[#1][#2]{EMPTY}}
\newcommandtwoopt*{\RvX}[2][][]{\RvXx[#1][#2]{EMPTY}}
\newcommandtwoopt*{\RvY}[2][][]{\RvYx[#1][#2]{EMPTY}}
\newcommandtwoopt*{\RvZ}[2][][]{\RvZx[#1][#2]{EMPTY}}
\newcommand*{\Matrix}[4]{\XmathFourPar{#1}{\mathcal{#2}}{#3}{#4}}
\newcommandtwoopt*{\MtxAx}[3][][]{\Matrix{#3}{A}{#1}{#2}}
\newcommandtwoopt*{\MtxBx}[3][][]{\Matrix{#3}{B}{#1}{#2}}
\newcommandtwoopt*{\MtxCx}[3][][]{\Matrix{#3}{C}{#1}{#2}}
\newcommandtwoopt*{\MtxDx}[3][][]{\Matrix{#3}{D}{#1}{#2}}
\newcommandtwoopt*{\MtxEx}[3][][]{\Matrix{#3}{E}{#1}{#2}}
\newcommandtwoopt*{\MtxFx}[3][][]{\Matrix{#3}{F}{#1}{#2}}
\newcommandtwoopt*{\MtxGx}[3][][]{\Matrix{#3}{G}{#1}{#2}}
\newcommandtwoopt*{\MtxHx}[3][][]{\Matrix{#3}{H}{#1}{#2}}
\newcommandtwoopt*{\MtxIx}[3][][]{\Matrix{#3}{I}{#1}{#2}}
\newcommandtwoopt*{\MtxJx}[3][][]{\Matrix{#3}{J}{#1}{#2}}
\newcommandtwoopt*{\MtxLx}[3][][]{\Matrix{#3}{L}{#1}{#2}}
\newcommandtwoopt*{\MtxMx}[3][][]{\Matrix{#3}{M}{#1}{#2}}
\newcommandtwoopt*{\MtxNx}[3][][]{\Matrix{#3}{N}{#1}{#2}}
\newcommandtwoopt*{\MtxOx}[3][][]{\Matrix{#3}{O}{#1}{#2}}
\newcommandtwoopt*{\MtxPx}[3][][]{\Matrix{#3}{P}{#1}{#2}}
\newcommandtwoopt*{\MtxQx}[3][][]{\Matrix{#3}{Q}{#1}{#2}}
\newcommandtwoopt*{\MtxUx}[3][][]{\Matrix{#3}{U}{#1}{#2}}
\newcommandtwoopt*{\MtxVx}[3][][]{\Matrix{#3}{V}{#1}{#2}}
\newcommandtwoopt*{\MtxWx}[3][][]{\Matrix{#3}{W}{#1}{#2}}
\newcommandtwoopt*{\MtxXx}[3][][]{\Matrix{#3}{X}{#1}{#2}}
\newcommandtwoopt*{\MtxYx}[3][][]{\Matrix{#3}{Y}{#1}{#2}}
\newcommandtwoopt*{\MtxZx}[3][][]{\Matrix{#3}{Z}{#1}{#2}}
\newcommandtwoopt*{\MtxA}[2][][]{\MtxAx[{#1}][{#2}]{}}
\newcommandtwoopt*{\MtxB}[2][][]{\MtxBx[{#1}][{#2}]{}}
\newcommandtwoopt*{\MtxC}[2][][]{\MtxCx[{#1}][{#2}]{}}
\newcommandtwoopt*{\MtxD}[2][][]{\MtxDx[{#1}][{#2}]{}}
\newcommandtwoopt*{\MtxE}[2][][]{\MtxEx[{#1}][{#2}]{}}
\newcommandtwoopt*{\MtxF}[2][][]{\MtxFx[{#1}][{#2}]{}}
\newcommandtwoopt*{\MtxG}[2][][]{\MtxGx[{#1}][{#2}]{}}
\newcommandtwoopt*{\MtxH}[2][][]{\MtxHx[{#1}][{#2}]{}}
\newcommandtwoopt*{\MtxI}[2][][]{\MtxIx[{#1}][{#2}]{}}
\newcommandtwoopt*{\MtxJ}[2][][]{\MtxJx[{#1}][{#2}]{}}
\newcommandtwoopt*{\MtxL}[2][][]{\MtxLx[{#1}][{#2}]{}}
\newcommandtwoopt*{\MtxM}[2][][]{\MtxMx[{#1}][{#2}]{}}
\newcommandtwoopt*{\MtxN}[2][][]{\MtxNx[{#1}][{#2}]{}}
\newcommandtwoopt*{\MtxO}[2][][]{\MtxOx[{#1}][{#2}]{}}
\newcommandtwoopt*{\MtxP}[2][][]{\MtxPx[{#1}][{#2}]{}}
\newcommandtwoopt*{\MtxQ}[2][][]{\MtxQx[{#1}][{#2}]{}}
\newcommandtwoopt*{\MtxU}[2][][]{\MtxUx[{#1}][{#2}]{}}
\newcommandtwoopt*{\MtxV}[2][][]{\MtxVx[{#1}][{#2}]{}}
\newcommandtwoopt*{\MtxW}[2][][]{\MtxWx[{#1}][{#2}]{}}
\newcommandtwoopt*{\MtxX}[2][][]{\MtxXx[{#1}][{#2}]{}}
\newcommandtwoopt*{\MtxY}[2][][]{\MtxYx[{#1}][{#2}]{}}
\newcommandtwoopt*{\MtxZ}[2][][]{\MtxZx[{#1}][{#2}]{}}
\newcommand*{\Vector}[3]{\XmathThreePar{#1}{#2}{#3}}
\newcommandtwoopt*{\VecA}[2][][]{\Vector{a}{#1}{#2}}
\newcommandtwoopt*{\VecB}[2][][]{\Vector{b}{#1}{#2}}
\newcommandtwoopt*{\VecC}[2][][]{\Vector{c}{#1}{#2}}
\newcommandtwoopt*{\VecD}[2][][]{\Vector{d}{#1}{#2}}
\newcommandtwoopt*{\VecE}[2][][]{\Vector{e}{#1}{#2}}
\newcommandtwoopt*{\VecU}[2][][]{\Vector{u}{#1}{#2}}
\newcommandtwoopt*{\VecV}[2][][]{\Vector{v}{#1}{#2}}
\newcommandtwoopt*{\VecW}[2][][]{\Vector{w}{#1}{#2}}
\newcommandtwoopt*{\VecX}[2][][]{\Vector{x}{#1}{#2}}
\newcommandtwoopt*{\VecY}[2][][]{\Vector{y}{#1}{#2}}
\newcommandtwoopt*{\VecZ}[2][][]{\Vector{z}{#1}{#2}}
\newcommand*{\Sf}[4]{\XmathFourPar{#1}{#2}{#3}{#4}}
\newcommandtwoopt*{\Sfcx}[3][][]{\Sf{#3}{\varphi}{#1}{#2}}
\newcommandtwoopt*{\Sfgx}[3][][]{\Sf{#3}{g}{#1}{#2}}
\newcommandtwoopt*{\SfGx}[3][][]{\MtxGx[{#1}][{#2}]{#3}}
\newcommandtwoopt*{\Sffx}[3][][]{\Sf{#3}{\phi}{#1}{#2}}
\newcommandtwoopt*{\SfFx}[3][][]{\MtxFx[{#1}][{#2}]{#3}}
\newcommandtwoopt*{\Sfhx}[3][][]{\Sf{#3}{h}{#1}{#2}}
\newcommandtwoopt*{\SfHx}[3][][]{\MtxHx[{#1}][{#2}]{#3}}
\newcommandtwoopt*{\Sfc}[2][][]{\Sfcx[{#1}][{#2}]{}}
\newcommandtwoopt*{\Sfg}[2][][]{\Sfgx[{#1}][{#2}]{}}
\newcommandtwoopt*{\SfG}[2][][]{\SfGx[{#1}][{#2}]{}}
\newcommandtwoopt*{\Sff}[2][][]{\Sffx[{#1}][{#2}]{}}
\newcommandtwoopt*{\SfF}[2][][]{\SfFx[{#1}][{#2}]{}}
\newcommandtwoopt*{\Sfh}[2][][]{\Sfhx[{#1}][{#2}]{}}
\newcommandtwoopt*{\SfH}[2][][]{\SfHx[{#1}][{#2}]{}}
\newcommand*{\FSet}[3]{\XmathThreePar{#1}{#2}{#3}}
\newcommandtwoopt*{\FSetB}[2][][]{\FSet{B}{#1}{#2}}
\newcommandtwoopt*{\FSetBV}[2][][]{\FSet{BV}{#1}{#2}}
\newcommandtwoopt*{\FSetC}[2][][]{\FSet{C}{#1}{#2}}
\newcommandtwoopt*{\FSetF}[2][][]{\FSet{F}{#1}{#2}}
\newcommandtwoopt*{\FSetH}[2][][]{\FSet{H}{#1}{#2}}
\newcommandtwoopt*{\FSetI}[2][][]{\FSet{I}{#1}{#2}}
\newcommandtwoopt*{\FSetl}[2][][]{\FSet{l}{#1}{#2}}
\newcommandtwoopt*{\FSetL}[2][][]{\FSet{L}{#1}{#2}}
\newcommandtwoopt*{\FSetM}[2][][]{\FSet{M}{#1}{#2}}
\newcommandtwoopt*{\FSetR}[2][][]{\FSet{R}{#1}{#2}}
\newcommandtwoopt*{\FSetX}[2][][]{\FSet{X}{#1}{#2}}
\newcommandtwoopt*{\FSetY}[2][][]{\FSet{Y}{#1}{#2}}
\newcommand*{\OpSet}[3]{\XmathThreePar{\mathfrak{#1}}{#2}{#3}}
\newcommandtwoopt*{\OpSetS}[2][][]{\OpSet{S}{#1}{#2}}
\newcommand*{\XBrace}[4][]{\Xmath{\mathopen#1#2#4\mathclose#1#3}}
\newcommand*{\Norm}[2][]{\XBrace[#1]{\lVert}{\rVert}{#2}}
\newcommandtwoopt*{\MtxNorm}[3][][M]{\Norm[#1]{#3}_{#2}}
\newcommandtwoopt*{\SupNorm}[3][][S]{\Norm[#1]{#3}_{#2}}
\providecommand*{\ReportMathLib}{UNDEFINED}
\renewcommand*{\ReportMathLib}{DEFINED}
\newcommand*{\transpose}[1]{\Xmath{#1'}}
\newcommand*{\abs}[1]{\Xmath{\left|#1\right|}}
\newcommand*{\norm}[1]{\Xmath{\|#1\|}}
\newcommand*{\idmat}[1]{\Xmath{I_{#1}}}
\newcommand*{\vol}[1]{\Xmath{\operatorname{Vol}\left(#1\right)}}
\DeclareMathOperator*{\argmin}{arg\;min}
\newcommand*{\setRealNumbers}{\Xmath{\mathbb{R}}}
\newcommand*{\setIntegers}{\Xmath{\mathbb{Z}}}
\newcommand*{\Field}{\Xmath{\mathbb{F}}}
\newacronym{MRT}{MRT}{maximum ratio transmission}
\newacronym{AWGN}{AWGN}{additive white Gaussian noise}
\newacronym{SISO}{SISO}{single-input single-output}
\newacronym{MISO}{MISO}{multiple-input single-output}
\newacronym{SIMO}{SIMO}{single-input multiple-output}
\newacronym{MIMO}{MIMO}{multiple-input multiple-output}
\newacronym{SNR}{SNR}{signal-to-noise ratio}
\newacronym{MAC}{MAC}{multiple access channel}
\begin{document}

% remove page number on all pages (except title page)
\pagestyle{empty}

% remove page number on title page
\makeatletter
\def\ps@IEEEtitlepagestyle{%
\def\@oddhead{}%
\def\@evenhead{}%
\def\@oddfoot{}%
\def\@evenfoot{}}
\makeatother

% stretch baseline to get 28 lines of text per page
\renewcommand{\baselinestretch}{1.75}
\normalsize

\sloppy

%% Paper Title
\title{Weak Secrecy in the Multi-Way Untrusted Relay Channel with Compute-and-Forward}

%% Authors
\author{
  Johannes~Richter,~Christian Scheunert,~Sabrina Engelmann, and Eduard~A.~Jorswieck
  \thanks{This work is supported by the German Research Foundation (DFG) in the Collaborative Research Center 912 ``Highly Adaptive Energy-Efficient Computing'' and within the Cluster of Excellence ``Center for Advancing Electronics Dresden''.\par The authors are with the Department of Electrical Engineering and Information Technology, Technische Universität Dresden, 01062 Dresden, Germany. Email: \{johannes.richter, sabrina.engelmann, christian.scheunert, eduard.jorswieck\}@tu-dresden.de}
}
% \author{
%   \IEEEauthorblockN{Johannes Richter, Christian Scheunert, Sabrina Engelmann, and Eduard A. Jorswieck}
%   \IEEEauthorblockA{Dep. of Electrical Engineering and Information Technology / Communications Laboratory\\
%     Technische Universität Dresden, 01062 Dresden, Germany\\
%     Email: \{Johannes.Richter, Christian.Scheunert, Sabrina.Engelmann, Eduard.Jorswieck\}@tu-dresden.de}
%   \thanks{This work is partly supported by the German Research Foundation (DFG) in the Collaborative Research Center 912 ``Highly Adaptive Energy-Efficient Computing'' and within the Cluster of Excellence ``Center for Advancing Electronics Dresden''.}
% }

%% Create Title
\maketitle

%% Abstract
\begin{abstract}
We investigate the problem of secure communications in a Gaussian multi-way relay channel applying the compute-and-forward scheme using nested lattice codes.
All nodes employ half-duplex operation and can exchange confidential messages only via an untrusted relay.
The relay is assumed to be honest but curious, i.e., an eavesdropper that conforms to the system rules and applies the intended relaying scheme.

We start with the general case of the single-input multiple-output (SIMO) L-user multi-way relay channel and provide an achievable secrecy rate region under a weak secrecy criterion.
We show that the securely achievable sum rate is equivalent to the difference between the computation rate and the multiple access channel (MAC) capacity.
Particularly, we show that all nodes must encode their messages such that the common computation rate tuple falls outside the MAC capacity region of the relay.
We provide results for the single-input single-output (SISO) and the multiple-input single-input (MISO) L-user multi-way relay channel as well as the two-way relay channel.
We discuss these results and show the dependency between channel realization and achievable secrecy rate.
We further compare our result to available results in the literature for different schemes and show that the proposed scheme operates close to the compute-and-forward rate without secrecy.
\end{abstract}

\begin{IEEEkeywords}
  Physical layer secrecy, multi-way relay channel, network coding, compute-and-forward, lattice codes
\end{IEEEkeywords}

\section{Introduction}
\label{sec:introduction}

Network coding has been a promising topic in communications since introduced by Ahlswede et al.\ in \cite{Ahlswede2000}.
It was shown that network coding can improve the throughput of a network and achieves the multicast capacity.
For static wired networks network coding is very well investigated and some frameworks are developed \cite{Fragouli2007,Yeung2008,Yeung2005,Yeung2005a}.
On the other hand, there is a lot of ongoing work in the field of network coding for wireless networks.
The properties of the wireless channel give the possibility for network coding on different layers.
Practical network coding on the forwarding layer has been proposed in \cite{Katti2008}.
The superposition property of the wireless channel also allows network coding on the physical layer, where the actual network coding is already done by the channel.
Physical layer network coding has been investigated in \cite{Zhang2006} and gained a lot of attention.
\cite{Nazer2011a} gives a survey on physical layer network coding techniques.
In \cite{Nazer2011}, this approach has been further developed to the compute-and-forward scheme for which the noise is immediately removed at any relay node in the network.
This is achieved by decoding linear combinations of incoming symbols at a relay instead of decoding them individually.
Compute-and-forward is based on structured codes like lattice codes that have been shown to achieve the additive white Gaussian noise channel capacity with lattice decoding in place of maximum-likelihood decoding in \cite{Erez2004}.

Beside reliable communication, secrecy considerations have also become more important.
Cryptography is based on the currently available computation performance and the time needed to decrypt a message without having the secret key.
This kind of security will get weaker with increasing computational power in the years to come.
Secrecy on the physical layer offers the possibility that an eavesdropper does not get any information about the exchanged messages.
Wyner \cite{Wyner1975} and Csisz\'{a}r and Körner \cite{Csiszar1978} proved that confidential data transmission over wiretap channels can be attained by channel coding without secret keys in terms of weak and strong secrecy, respectively.
% In cryptography, transmission channels are usually considered to be free of physical layer effects and hence they are assumed to be noiseless.
% In physical-layer security, noise present on the channel can be used to enhance the level of security of a communication system.
The achievable secrecy rate in the presence of an untrusted relay is studied in several papers with slightly different scenarios.
In \cite{Huang2013} Huang et al.\ investigate different secure transmission schemes in a relay network with a direct connection between source and destination.
The question if an untrusted relay is helpful if a direct connection between source and destination exists has been investigated in \cite{He2010}.
In \cite{He2013a} a Gaussian two-hop network is considered where source and destination do not have a direct connection.
The destination node can help the source node by jamming the relay node with a random signal.
This model is extended in \cite{He2013} to a multi-hop line network.

The two-way wiretap channel, in which two nodes can only exchange messages via an untrusted relay, was first considered in \cite{Tekin2008,Tekin2010}.
It was shown that cooperative jamming, i.e., jamming with controlled interference between codewords, could reduce the eavesdropper's signal-to-noise ratio and hence improve the level of weak secrecy.
In \cite{Pierrot2011} it was shown that this result also holds for strong secrecy.

Most secrecy schemes are based on random codes but there is also work done in the field of secrecy through structured codes, namely lattice codes. 
Achievable rates for the lattice coded Gaussian wiretap channel have been developed in \cite{Choo2011}.
Theorem 1 in \cite{Choo2011} proposes a lattice code construction that achieves the weak secrecy capacity.
Lattice codes for the Gaussian wiretap channel have been intensively studied by Oggier et al.\ in \cite{Oggier2013} and references therein.
They introduced the secrecy gain as a design criterion for good lattice codes for wiretap channels in \cite{Belfiore2010}.
It was shown by Ling et al.\ that lattice codes can achieve strong secrecy over the mod-$\Lambda$ Gaussian channel \cite{Ling2012}.
In \cite{Ling2012a} Ling et al. introduced the flatness factor \cite{Belfiore2011} as the main tool to prove that nested lattice codes can achieve semantic security and strong secrecy over the Gaussian wiretap channel.
Compute-and-forward network coding together with strong physical-layer security based on universal hash functions has been investigated in \cite{He2013a}.
All these works focus either on a wiretap channel or on a two-hop relay network where a source node transmits via an untrusted relay to a destination.
The destination may help by jamming but does not transmit a secure message itself.
In \cite{Kashyap2012} Kashyap et al.\ consider a two-way relay network with an untrusted relay where two nodes transmit one message each simultaneously via an untrusted relay.
They provide an achievable power-rate region with perfect secrecy as well as strong secrecy by applying compute-and-forward.

In this work we consider a L-user relay channel where all users want to securely transmit a message to all other users.
There does not exist direct connections between the users and they have to transmit via an untrusted relay which applies compute-and-forward.
We investigate the \gls{SIMO} multi-way relay channel and provide an achievable weak secrecy rate region.
We prove this result and derive results for the \gls{SISO} and \gls{MISO} multi-way relay channel as well as the \gls{SISO} two-way relay channel, which can be seen as special cases of the \gls{SIMO} multi-way relay channel.
Finally, we provide simulation results and compare our scheme against existing schemes.

Our scheme has several advantages over existing schemes.
The previous work of \cite{He2008} and \cite{Kashyap2012} does not consider fading channels and their results can not be easily extended.
Our scheme however takes fading into account and uses only well-known techniques like random binning for wiretap channels and compute-and-forward for two-way relaying.
Further, we provide an descriptive and convincing interpretation of our result, i.e., the secrecy rate is the compute-and-forward rate region minus the \gls{MAC} rate region.
We also achieve a slightly higher secrecy rate when we simplify our model to the model used for example in \cite{He2008}.
A drawback of our scheme is the weak secrecy criterion instead of strong or even perfect secrecy.
However, this disadvantage might be overcome by extending our scheme with the same techniques as in \cite{He2013a}, namely hashing functions, to provide strong secrecy.
This will be part of future work and is not addressed in this paper.
 
% We investigate the two-way relay channel as a very important special case of this scenario.
% We provide an achievable weak secrecy rate region and compare the results to existing schemes.
% Finally, we extend the results to the general L-user case and provide a full proof of the achievable secrecy rate region. 
% In this work we  consider the Gaussian two-way relay channel applying the compute-and-forward scheme under weak secrecy criterion.
% We show that the securely achievable sum rate is equivalent to the difference between computation rate and MAC capacity.
% We provide an achievable secrecy rate region including the proof and compare this result to \cite{Kashyap2012}.

The outline of the paper is as follows. 
In Section \ref{sec:system-model} the system model and the coding scheme are given. 
Section \ref{sec:simo-l-user} provides an achievable secrecy rate region for the L-user \gls{SIMO} multi-way relay channel.
Section \ref{sec:siso-miso-l-user} derives from the previous result the achievable secrecy rate region for the \gls{SISO} and the \gls{MISO} multi-way relay channel.
Section \ref{sec:2-user-case} investigates the special case of the \gls{SISO} two-way relay channel.
Section \ref{sec:discussion} discusses the results and illustrates them with simulations.
Section \ref{sec:conclusion} concludes the work.

\subsection{Notation}

Let $\log^+(x) \triangleq \max\{0,\log(x)\}$.
We denote by $\transpose{x}$ the transpose of vector $x$ and by $e_i$ the unit vector with a one at position $i$ and zeros elsewhere.
Further we denote by $H(X)$ the entropy of a discrete random variable $X$ and by $h(Y)$ the differential entropy of a continuous random variable $Y$.
The mutual information between two random variables $X$ and $Y$ is denoted by $I(X;Y)$.
Let $\mathcal{P}(X)$ be the power set of set $X$.
By $\delta(n)$ we denote a function that tends to zero if $n$ goes to infinity.

% In the following we recall some lattice definitions that are used throughout the paper. 
% An $n$-dimensional lattice $\Lambda \subset \setRealNumbers^n$ is a group under addition with generator matrix $G \in \setRealNumbers^{n \times n}$:
% \begin{equation}
%   \Lambda = \{Gc : c \in \setIntegers^n\}.
% \end{equation}
% A lattice quantizer is a mapping $Q_\Lambda: \setRealNumbers^n \to \Lambda$ that maps a point $x$ to the nearest lattice point in Euclidean distance,
% \begin{equation}
%   Q_\Lambda(x) = \argmin_{\lambda \in \Lambda} \norm{x - \lambda}. 
% \end{equation}
% Let the modulo operation with respect to the lattice $\Lambda$ be defined as 
% \begin{equation}
%   x \bmod \Lambda = x - Q_\Lambda(x).
% \end{equation}
% We call $\mathcal{V} = \{x:Q_\Lambda(x) = 0\}$ the fundamental Voronoi region of the lattice $\Lambda$ and denote by $\vol{\mathcal{V}}$ the volume of $\mathcal{V}$.

\section{System Model}
\label{sec:system-model}

\begin{figure}
  \centering
  \tikzsetnextfilename{l-user-relay-channel}
\begin{tikzpicture}[>=latex]

  \tikzpicturedependsonfile{figures/l-user-relay-channel.tikz}
  \tikzset{
    nn/.style={
      draw,
      circle,
      inner sep=0,
      minimum size=2em,
    }
  }

  % MAC Phase
  \begin{scope}
    \node at (0,1.5) {1. Phase: MAC};
    \node[nn](n1) at (0,0) {$R$};
    \node[nn](n2) at (150:1.5) {$1$};
    \node[nn](n3) at (30:1.5) {$2$};
    \node[nn](n4) at (-90:1.5) {$L$};
    \node[rotate=-30](n5) at (-30:1) {$\vdots$};

    \draw[->] (n2) -- node[above] {$h_1$} (n1);
    \draw[->] (n3) -- node[above] {$h_2$} (n1);
    \draw[->] (n4) -- node[left] {$h_L$} (n1);
  \end{scope}

  % BC Phases
  \begin{scope}[xshift=4cm]
    \node at (0,1.5) {2.-L. Phase: BC};
    \node[nn](n1) at (0,0) {$R$};
    \node[nn](n2) at (150:1.5) {$1$};
    \node[nn](n3) at (30:1.5) {$2$};
    \node[nn](n4) at (-90:1.5) {$L$};
    \node[rotate=-30](n5) at (-30:1) {$\vdots$};

    \draw[<-] (n2) -- node[above] {$h_1$} (n1);
    \draw[<-] (n3) -- node[above] {$h_2$} (n1);
    \draw[<-] (n4) -- node[left] {$h_L$} (n1);
  \end{scope}
  
\end{tikzpicture}
  \caption{System model: Multi-Way Relay-Channel}
  \label{fig:system-model}
\end{figure}
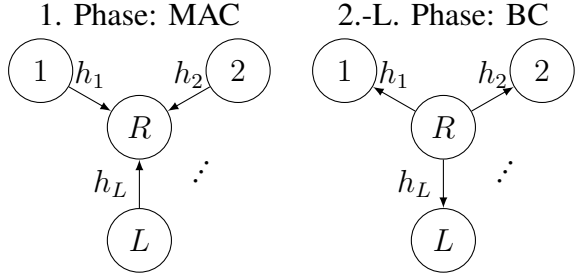
We investigate the multi-way relay-channel with half-duplex nodes as depicted in Figure \ref{fig:system-model}.
All nodes have messages for each other but have no direct connection.
They transmit messages $w_1,\dots,w_L$ with the help of a relay in $L$ phases.
We denote by $x_\ell \in \setRealNumbers^n$ and $x_r \in \setRealNumbers^n$ the transmit signals of the user nodes $1,\dots,L$ and the relay, respectively.
Analogously, we define $y_\ell \in \setRealNumbers^n$ and $y_r \in \setRealNumbers^n$ as the received signals at the user nodes and the relay. 
Each node has a transmit power constraint $\norm{x_i}^2 \leq nP,~\forall i \in \{1,\dots,L,r\}$. 
The links are \gls{AWGN} channels with quasi static block flat fading with fading coefficients $h_i$ and noise $z_i$ with $i \in \{1,\dots,L,r\}$.
We assume that the channels are reciprocal and constant over all $L$ phases.
This results in the following channel model
\begin{subequations}
  \begin{align}
    y_r &= \sum_{\ell = 1}^L h_\ell x_\ell + z_r,\\
    y_\ell &= h_\ell x_r + z_\ell.
  \end{align}
\end{subequations}
We consider four scenarios:
\begin{enumerate}
\item multiple antennas at the relay, which is covered in Section \ref{sec:simo-l-user},
\item multiple antennas at the user nodes, which is covered in Section \ref{sec:siso-miso-l-user},
\item single antennas at all nodes, which is covered in Section \ref{sec:siso-miso-l-user}, and
\item the two-way relay channel with single antennas at all nodes as a special but important case, which is covered in Section \ref{sec:2-user-case}.
\end{enumerate}
In practice it often happens that such a transmission between nodes has to use a relay that cannot be trusted.
Therefore the messages have to be encoded at the source nodes such that the relay cannot decode the messages separately.
% In this work we provide a transmission strategy based on compute-and-forward where all source nodes want to securely transmit messages over a relay where the relay is an honest but curious eavesdropper.
The reliability requirement at the user nodes $1,\dots,L$ can be written as
\begin{equation}
   \lim_{n \to \infty} \Pr(\hat{w}_\ell \neq w_\ell) = 0,~\forall \ell = 1,\dots,L
\end{equation}
and the weak secrecy requirement as
\begin{equation}
  \lim_{n \to \infty} \tfrac{1}{n} I(\mathcal{W} ; Y_r) = 0,
  \label{eq:weak-secrecy-criterion}
\end{equation} 
for all $\mathcal{W} \in \mathcal{P}(\{W_1,\dots,W_L\})$ where $n$ is the block length or number of channel uses.
% $I(W_1,\dots,W_L;Y)$ is the mutual information between the messages $W_1,\dots,W_L$ of the users and the received signal $Y_r$ at the eavesdropper and $n$ is the block length or number of channel uses.
For the analysis of the secrecy condition we will also use the following alternative representation of \eqref{eq:weak-secrecy-criterion}
\begin{equation}
  \label{eq:weak-secrecy-criterion-alternative}
  \lim_{n \to \infty} \tfrac{1}{n} H(\mathcal{W}) \leq \lim_{n \to \infty} H(\mathcal{W} \mid Y).
\end{equation}
We get the achievable secrecy rates $R_{s_1},\dots,R_{s_L}$ by choosing $H(W_\ell) = 2^{nR_{s_\ell}}$ such that
\begin{equation}
  \lim_{n \to \infty} \tfrac{1}{n} \log_2 H(W_\ell) = R_{s_\ell}.
\end{equation}

% Throughout this paper we will consider weak secrecy as defined by the following equation
% \begin{equation}
%   \label{eq:weak-secrecy-criterion}
%   \lim_{n \to \infty} \tfrac{1}{n} I(W ; Y) = 0,
% \end{equation}
% where $I(W;Y)$ is the mutual information between a message $W$ of a transmitter and the received signal $Y$ at the eavesdropper and $n$ is the block length or number of channel uses.
% For the analysis of the secrecy condition we will also use the following alternative representation of \eqref{eq:weak-secrecy-criterion}
% \begin{equation}
%   \label{eq:weak-secrecy-criterion-alternative}
%   \lim_{n \to \infty} \tfrac{1}{n} H(W) \leq \lim_{n \to \infty} H(W \mid Y).
% \end{equation}
Because the relay is the indented receiver of the messages and the eavesdropper at the same time, some additional effort is needed to ensure secret message transmission.
We can already see, that we can achieve this condition only if we apply a relaying scheme where the relay does not need to decode the single messages.
Further we need to achieve a transmission rate larger than the \gls{MAC} capacity otherwise the relay would be able to decode the single messages.
This is a necessary but not a sufficient condition.
A relaying strategy which allows to fulfill these requirements is compute-and-forward, which was introduced by Nazer and Gastpar in \cite{Nazer2011}.
We will use this framework to provide an achievable secrecy rate region.   

\subsection{Nested Lattice Code}
\label{sec:nested-lattice-code}

% We base our transmission scheme on nested lattice codes because they have a linear structure and achieve the capacity of an \gls{AWGN} channel \cite{Erez2004}.
The compute-and-forward framework is based on nested lattice codes and therefore we recall some lattice definitions that are used throughout the paper.
For further details on lattice codes see \cite{Conway1999,Zamir2009,Nazer2011a,Erez2005a}.

An $n$-dimensional lattice $\Lambda \subset \setRealNumbers^n$ is a group under addition with generator matrix $G \in \setRealNumbers^{n \times n}$:
\begin{equation}
  \Lambda = \{Gc : c \in \setIntegers^n\}.
\end{equation}
A lattice quantizer is a mapping $Q_\Lambda: \setRealNumbers^n \to \Lambda$ that maps a point $x$ to the nearest lattice point in Euclidean distance,
\begin{equation}
  Q_\Lambda(x) = \argmin_{\lambda \in \Lambda} \norm{x - \lambda}. 
\end{equation}
Let the modulo operation with respect to the lattice $\Lambda$ be defined as 
\begin{equation}
  x \bmod \Lambda = x - Q_\Lambda(x).
\end{equation}
We call $\mathcal{V} = \{x:Q_\Lambda(x) = 0\}$ the fundamental Voronoi region of the lattice $\Lambda$ and denote by $\vol{\mathcal{V}}$ the volume of $\mathcal{V}$.
Two lattices $\Lambda_C$ and $\Lambda_F$ are called nested, if $\Lambda_C \subseteq \Lambda_F$.
We call $\Lambda_C$ the coarse lattice and $\Lambda_F$ the fine lattice.
A nested lattice code $\mathcal{L}$ is formed by taking all of the points of the fine lattice $\Lambda_F$ in the fundamental Voronoi region $\mathcal{V}_C$ of the coarse lattice $\Lambda_C$, i.e., $\mathcal{L} = \Lambda_F \cap \mathcal{V}_C$.
The rate of a nested lattice code is given by
\begin{equation}
  r = \frac{1}{n}\log \abs{\mathcal{L}} = \frac{1}{n} \log \frac{\vol{\mathcal{V}_C}}{\vol{\mathcal{V}_F}}.
\end{equation}
From \cite{Erez2004} and \cite{Erez2005a} we know that there exist good nested lattice codes that can achieve the capacity of an \gls{AWGN} channel.
These nested lattice codes have been used to develop the compute-and-forward framework \cite{Nazer2011,Nazer2011a}. 
We utilize this framework as relaying strategy and extend it with secrecy constraints as described above.
%
% In the multi-way relay channel, we consider compute-and-forward at the relay as introduced in \cite{Nazer2011}.
% Thereby, each message is encoded at the user nodes using a nested lattice code $\mathcal{L}$.
% For further details on lattice codes see \cite{Zamir2009,Nazer2011a,Erez2005a}.

\subsection{Encoding}

Each user node $\ell$ chooses a message $w_\ell \in \Field_p^k$ i.i.d. from a uniform distribution over the index set $\{1,2,\dots,2^{\lfloor nR_s \rfloor } \}$.
For the ease of simplicity we assume equal message length at all users.
If this is no the case, messages with length smaller than $k$ will be padded to length $k$ with zeros.
Each message gets mapped to a lattice code $\mathcal{L} = \Lambda_F \cap \mathcal{V}_C$ where the second moment of $\Lambda_C$ equals $P$ such that the power constraint is satisfied.
In order to fulfill the secrecy requirements some additional effort is required. 
Each user node $\ell$ uses the same codebook $\mathcal{L} = \Lambda_F \cap \mathcal{V}_C$ with $|\Lambda_F \cap \mathcal{V}_C| = 2^{\lfloor n(R_s + R_d) \rfloor}$.
Like wiretap codes, this codebook is randomly binned into several bins, where each bin contains $2^{\lfloor nR_d \rfloor}$ codewords.
The secret message $w_\ell$ gets mapped to the bins.
The actual transmitted codeword $t_\ell$ is chosen from that bin according to a uniform distribution.
%
% Then, any message gets mapped to a lattice point $t_\ell = \phi(w_\ell) \in \Lambda_F \cap \mathcal{V}_B$ by an encoding function $\phi$.
% The encoding function $\phi$ can be denoted by \cite[Lemma 5]{Nazer2011}
% \begin{equation}
%   \label{eq:message-to-lattice}
%   \phi(x) = [G_B p^{-1} g(G x)] \bmod \Lambda_B, \,x \in \Field_p^k
% \end{equation}
% where $G_B \in \setRealNumbers^{n \times n}$ is the generator matrix of $\Lambda_B$, $G \in \Field_p^{n \times k}$ is a code generator matrix, and $g$ is a component-wise mapping
% \begin{equation}
%   g: \Field_p \to \{0,1,2,\dots,p-1\}.
% \end{equation}
% Basically, $g$ provides a change of alphabet by keeping the set of elements.
% %
% If $p$ grows appropriately with $n$ such that $n/p \to 0$ as $n \to \infty$ and we use the mapping in \eqref{eq:message-to-lattice} we can achieve the computation rate as given in \cite[Theorem 2]{Nazer2011}.
% A more general encoding and mapping strategy for finite $n$ and practical physical-layer network coding schemes is given in \cite[Section V]{Feng2012}.

% We add an additional lattice point $v_\ell$ that is chosen i.i.d. from a uniform distribution over $\Lambda_B \cap \mathcal{V}_C$ to the lattice point $t_\ell$.
% This lattice point is introduced to confuse the eavesdropper such that it is not able to get information about the sent messages.
% Adding $v_\ell$ is equivalent to randomly choosing a coset of $\mathcal{L}$.
%
Further we add some dither $u_\ell$ that is uniform distributed over $\mathcal{V}_C$ and known by the relay.
This dithering gives statistical properties of the transmitted signal needed to achieve the compute-and-forward rate \cite{Nazer2011}.
It was shown in \cite[Appendix C]{Nazer2011} that this dither might be chosen in a deterministic way.
To make sure that the transmit signal fulfills the power constraint, we build the modulo with respect to the coarse lattice.
We get the following $n$-dimensional transmit vector at node $\ell$
\begin{equation}
  x_\ell = [t_\ell + u_\ell] \bmod \Lambda_C .
\end{equation}
With this encoding scheme we get the following rates:
\begin{enumerate}[a)]
\item $R_d$ is the rate of the randomly chosen messages within a bin, 
\item $R_s$ is the secret message rate, and
\item $R_s + R_d = \frac{1}{n} \log_2 \frac{\vol{\mathcal{V}_C}}{\vol{\mathcal{V}_F}}$ is the transmit rate of the user nodes.
\end{enumerate}

\section{SIMO Multi-Way Relay Channel}
\label{sec:simo-l-user}

In the following section we investigate the \gls{SIMO} channel, i.e., the relay is equipped with multiple antennas.
The system model for the first phase from the user nodes to the relay is shown in Figure \ref{fig:system-model-simo-case}.
We want to point out, that this is the most general case without assuming \gls{MIMO} channels because the \gls{SISO} and \gls{MISO} channel can be seen as special cases as described later.

\begin{figure*}
  \centering
  \tikzsetnextfilename{system-model-simo-l-user-relay-channel}
\begin{tikzpicture}[>=latex,node distance=4mm]

  \tikzpicturedependsonfile{figures/system-model-simo-l-user-relay-channel.tikz}

  \tikzset{
    fitnode/.style={
      draw,
      dashed,
      rectangle,
      rounded corners,
      inner sep=2mm,
    },
    operator/.style={
      circle,
      draw,
      inner sep=0mm
    },
    function/.style={
      rectangle,
      draw
    },
  }

  \matrix (system) [
  matrix of math nodes,
  column sep=10mm,
  row sep=12mm,
  nodes in empty cells = true,
  nodes={% General options for all nodes
    anchor=center,
    text centered,
  },
  ] {
    w_1 & |[function]| \mathcal{E}_1 & |[minimum width=1cm]| & |[operator]| + & |[minimum width=1cm]| & |[function]| \mathcal{D}_1 & \hat{y}_1 \\
    w_L & |[function]| \mathcal{E}_L & |[minimum width=1cm]| & |[operator]| + & |[minimum width=1cm]| & |[function]| \mathcal{D}_{L-1} & \hat{y}_{L-1} \\
  };

  \node[draw,inner sep=0mm,fit=(system-1-3)(system-2-3),function]{$H$};
  \node[draw,inner sep=0mm,fit=(system-1-5)(system-2-5),function]{$B$};

  % Chains
  { [start chain,every on chain/.style={join}, every join/.style={->}]
    \foreach \y in {1,...,7}{
      \chainin (system-1-\y);
    }
  }

  { [start chain,every on chain/.style={join}, every join/.style={->}]
    \foreach \y in {1,...,7}{
      \chainin (system-2-\y);
    }
  }

  % additional noise
  \node(n1)[above=of system-1-4] {$z_1$};
  \draw[->] (n1) -- (system-1-4);
  \node(nL)[above=of system-2-4] {$z_{\eta_R}$};
  \draw[->] (nL) -- (system-2-4);

  % annotate paths
  \path[draw=none] (system-1-2) -- node[above] {$x_1$} (system-1-3);
  \path[draw=none] (system-2-2) -- node[above] {$x_L$} (system-2-3);

  \path[draw=none] (system-1-4) -- node[above] {$y_1$} (system-1-5);
  \path[draw=none] (system-2-4) -- node[above] {$y_{\eta_R}$} (system-2-5);

  \path[draw=none] (system-1-5) -- node[above] {$\tilde{y}_1$} (system-1-6);
  \path[draw=none] (system-2-5) -- node[above] {$\tilde{y}_{L-1}$} (system-2-6);

  % mark system parts
  \node[fitnode,label=above:User nodes,fit={(system-1-1)(system-2-2)(-6,1.7)(-6,-1.4)}] {};
  \node[fitnode,label=above:Channel,fit={(system-1-3)(system-2-4)(-1,1.7)(-1,-1.4)}] {};
  \node[fitnode,label=above:Relay,fit={(system-1-5)(system-2-7)(6,1.7)(6,-1.4)}] {};

\end{tikzpicture}
  \caption{System model of a SIMO L-user relay channel (MAC phase)}
  \label{fig:system-model-simo-case}
\end{figure*}
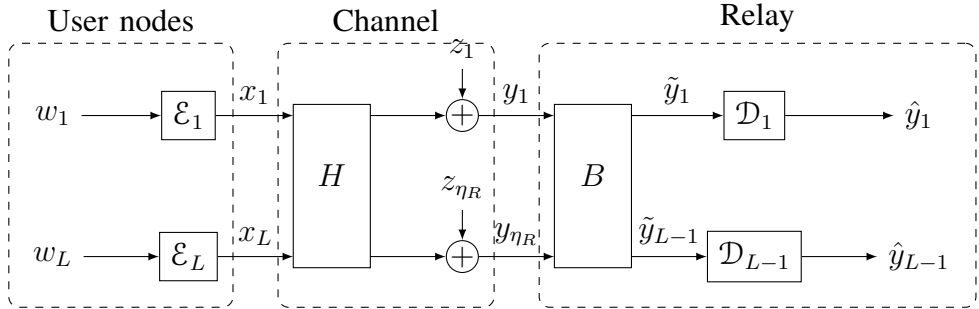

\subsection{Channel Model}
\label{sec:channel-model}

We assume the relay is equipped with $\eta_R$ antennas and therefore receives $\eta_R$ signals $y_1,\dots,y_{\eta_R}$.
The channel is characterized by the channel matrix $H \in \setRealNumbers^{\eta_R \times L}$ whose entries $h_{ij}$ are the fading coefficients from the $j$-th user to the $i$-th antenna at the relay.
Note that by this representation the $i$-th row of $H$, denoted by $h_i$, represents the channel from user $i$ to the relay.
The channel model for the first phase is then given by 
\begin{equation}
  Y = H X + Z,
\end{equation}
where $X \in \setRealNumbers^{L \times n}$ is a matrix whose $i$-th row is the transpose transmit vector $\transpose{x}_i$ of user $i$.
Further, $Y \in \setRealNumbers^{\eta_R \times n}$ is a matrix whose $i$-th row represents the received data stream $\transpose{y}_i$ at the $i$-th antenna and $Z \in \setRealNumbers^{\eta_R \times n}$ is white Gaussian noise.
The rows of $Z$ are denoted by $z_i$ and are i.i.d. according to a normal distribution with zero mean and unit variance, i.e., $z_i \sim \mathcal{N}(0,\idmat{n})$.
The relay decodes $L-1$ linear combinations of the original messages and encodes them with a capacity achieving code.
The $L-1$ codewords are then sent to all users simultaneously in the remaining $L-1$ phases.
Note that for these phases we have a broadcast channel and we assume reciprocal channels which are constant over all $L$ phases.
Therefore the rate constraints for the last $L-1$ phases are given by the capacity of the individual point-to-point channels.
Due to the uplink-downlink duality for reciprocal channels and equal power constraints \cite[Chapter 10.3]{Tse2005}, these are always larger or equal to the \gls{MAC} rate region and the first phase will be the limiting phase.
Therefore we will focus on that phase for developing the achievable rate of the whole system.

\subsection{Relay Strategy}

The relay uses compute-and-forward \cite{Nazer2011} as relaying strategy and tries to decode $L-1$ linear combination of the original messages as shown in Figure \ref{fig:system-model-simo-case}.
It uses a preprocessing matrix $B$ to get the optimal signals prior to decoding.
The achievable computation rate is then given by \cite{Nazer2011,Zhan2010a}
\begin{equation}
  R(H,a_\ell,b_\ell) = \frac{1}{2} \log_2 \left( \frac{P}{\norm{b_\ell}^2 + P \norm{\transpose{H} b_\ell - a_\ell}^2 } \right),
  \label{eq:computation-rate-simo}
\end{equation}
where $a_\ell$ is the coefficient vector for the $\ell$-th linear combination and $b_\ell$ is the preprocessing vector corresponding to the $\ell$-th row in $B$.
The optimal preprocessing matrix for a given coefficient matrix $A = \transpose{(\transpose{a}_1,\dots,\transpose{a_{L-1}})}$ and the resulting rates have been found in \cite{Zhan2010a}.
We use these results and provide them in the following for completeness.

The optimal preprocessing matrix is given by
\begin{equation}
  B = A \transpose{H} (H \transpose{H} + \tfrac{1}{P} \idmat{\eta_R})^{-1}.
  \label{eq:optimal-preprocessing-matrix}
\end{equation}
Plugging in \eqref{eq:optimal-preprocessing-matrix} in \eqref{eq:computation-rate-simo}, we get
\begin{equation}
  R(H,a_\ell) = -\frac{1}{2} \log_2 (\transpose{a}_\ell VD\transpose{V} a_\ell),
\end{equation}
where $V \in \setRealNumbers^{L \times L}$ is the right eigenmatrix of $H$ and $D \in \setRealNumbers^{L \times L}$ is a diagonal matrix with elements
\begin{equation}
  D_{ii} = \begin{cases}
    \tfrac{1}{P \lambda_i + 1} & i \leq \rank(H) \\
    1 & i > \rank(H)
  \end{cases},
\end{equation}
where $\lambda_i$ is the $i$-th eigenvalue of $\transpose{H} H$.

All $L-1$ linear combinations have to be decodable at the relay to allow all $L$ users to decode all original messages.
Therefore the resulting achievable rate of all $L$ users is
\begin{equation}
  R_{CF} = \min_{\ell \in \{1,\dots,L-1\}} R(H,a_\ell).
\end{equation}
To get the highest possible rates we need to find a set of full rank equations with coefficient matrix $A \in \setIntegers^{L-1 \times L}$.
Because all $L$ users need to be able to decode all messages, we additionally require the rows of $A$ to be linear independent of all vectors $e_\ell$ where $e_\ell$ is the unit vector with a one at the $\ell$-th position and zeros elsewhere.
This results in the following optimization problem
\begin{equation}
  \label{eq:optimize-coefficient-matrix}
  \begin{split}
    &\max_{A_\ell}~ \min_{\ell \in \{1,\dots,L-1\}} \left( -\frac{1}{2} \log_2 (\transpose{a}_\ell V D \transpose{V} a_\ell )\right)\\
    &\text{s.t.}~\rank(A_\ell) = L,~ \forall i \in \{1,\dots,L\}
  \end{split}  
\end{equation}
where
\begin{equation*}
  A_\ell =
  \begin{pmatrix}
    \transpose{a}_1\\
    \vdots\\
    \transpose{a}_{L-1}\\
    \transpose{e}_\ell
  \end{pmatrix}
  =
  \begin{pmatrix}
    a_{1,1} & a_{1,2} & \dots & a_{1,L} \\
    \vdots & \vdots & \ddots & \vdots \\
    a_{L-1,1} & a_{L-1,2} & \dots & a_{L-1,L}\\
    0 & 0 & 1 & 0
  \end{pmatrix}.
\end{equation*}
This can be solved by several algorithms \cite{Richter2012,Wei2012} with additional constraints. 

For the first phase we get the following rate constraint
\begin{equation}
  R_s + R_d \leq R_{CF}.
\end{equation}
In the second phase the relay maps all decoded linear combinations $\hat{y}_i \in \Lambda_F \cap \mathcal{V}_C$ to an index of the set $\{1,2,\dots,2^{\lfloor nR_r \rfloor} \}$ and uses a capacity achieving code to encode and an optimal beamforming vector to transmit to the user nodes with rate 
\begin{equation}
  \label{eq:transmission-rate-relay}
  R_r = \max_{\norm{\omega}^2 \leq 1}~ \min_{\ell \in \{1,\dots,L\}} \tfrac{1}{2} \log_2 (1 + P (\transpose{\omega} h_\ell)^2 ),
\end{equation}
where $\omega$ is the multicast beamforming vector at the relay.
An efficient way to obtain the optimal beamforming vector can be found in \cite{Mochaourab2011}.
% TODO: result for \eta_R \geq L?

\subsection{Decoding at the user nodes}

In each of the $L-1$ phases, each user node receives an index of the set $\{1,2,\dots,2^{\lfloor nR_r \rfloor }\}$.
They can decode as long as the transmission rate from relay to user is less than the point-to-point capacity of the channels, which is given in \eqref{eq:transmission-rate-relay}.
If decoded successfully they know the lattice point $\hat{y}_{\ell} \in \Lambda_F \cap \mathcal{V}_C$ being transmitted by the relay.

User $\ell$ can decode the original lattice points from the other users by solving the following system of linear equations
\begin{equation}
  A_\ell T = \tilde{Y}_\ell
\end{equation}
where
\begin{equation}  
  \tilde{Y}_\ell =
  \begin{pmatrix}
    \hat{y}_{1,1} & \hat{y}_{1,2} & \dots & \hat{y}_{1,n} \\
    \vdots & \vdots & \ddots & \vdots \\
    \hat{y}_{L-1,1} & \hat{y}_{L-1,2} & \dots & \hat{y}_{L-1,n} \\
    t_{\ell,1} & t_{\ell,2} & \dots & t_{\ell,n}
  \end{pmatrix}.
\end{equation}
It gets an estimate of the transmitted lattice points $t_\ell$ of the users $1,\dots,L$ by
\begin{equation}
  \hat{T_\ell} = A_\ell^{-1} \tilde{Y}_\ell.
\end{equation}
User $\ell$ already knows $t_\ell$ because this is its own message.
Please note that the users get the lattice points without the dither because the relay already subtracted it.

After solving for $T_\ell$ each user knows all $L$ lattice point.
If the lattice point is known, the message and the bin is known.
Therefore each user gets all $L$ messages.

% By applying the modulo operation with respect to $\Lambda_B$ to every decoded lattice point $\hat{x}_i,~i \in \{1,\dots,L\}$ it gets the lattice points $\hat{t}_i \in \mathcal{L}$ and maps them back to the message set $\Field_p^k$ with $\phi^{-1}(\hat{t}_i)$, where
% \begin{equation}
%   \phi^{-1}(t) = (\transpose{G} G)^{-1} \transpose{G} g^{-1} (p [G_B^{-1} t] \bmod \setIntegers^n), \, t \in \Lambda_F,
% \end{equation}
% which is the inverse function of \cref{eq:message-to-lattice}.
% This results in the message estimates $\hat{w}_i$, i.e.,\ 
% \begin{equation}
%   \hat{w}_i = \phi^{-1} \left( x_i \bmod \Lambda_B \right),
% \end{equation}
% for all $i \in \{1,\dots,L\}$.

From all $L$ phases we obtain a rate constraint for the source nodes given by
\begin{equation}
  \label{eq:constraint-rt}
  R_d + R_s \leq \min \{R_{CF}, R_r\} = R_{CF}.
\end{equation}
From the lattice code construction in Section \ref{sec:nested-lattice-code} and \eqref{eq:constraint-rt} we get the following constraint for the secure communication rate $R_s$
\begin{align}
  R_s \leq R_{CF} - R_d.
\end{align}
We get the same constraint for all source nodes because all use the same nested lattice chain.
To ensure that the relay will not get any information about individual messages, the rate $R_d$ has to be chosen appropriately.
This will be addressed in the next section.

\subsection{Achievable Secrecy Rate Region}

In this section we provide an achievable secrecy rate region.
The proof is given in the appendix.

\begin{theorem}[Achievable Secrecy Rate]
  \label{theorem:secrecy-rate-region-simo-l-user}
  Consider a multi-way relay channel with $L$ users and $\eta_R$ antennas at the relay.
  The channel is characterized by the matrix $H$ whose entries $h_{ij}$ represent the fading coefficient from the $j$-th user to the $i$-th antenna at the relay.
  All nodes can only communicate via the relay and have no direct links.
  Each node has a transmit power constraint $\norm{x_i}^2 \leq nP,~\forall i \in \{1,\dots,L,r\}$.
  Then the weak secrecy rate region is given by
  \begin{equation*}
    \label{eq:secrecy-rate-region-simo-l-user}
    L \cdot R_s \leq \max \left\{0, L R_{CF} - \tfrac{1}{2} \log_2 \det (\idmat{\eta_R} + P H \transpose{H}) \right\}
  \end{equation*}
  where 
  \begin{equation*}
    R_{CF} = \min_{i \in \{1,\dots,L-1\}} R(H,a_i)
  \end{equation*}
  with $a_i$ chosen in programming problem \eqref{eq:optimize-coefficient-matrix}.
\end{theorem}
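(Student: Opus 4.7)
The plan is to combine the wiretap random-binning argument with the compute-and-forward reliability analysis, using the Gaussian MAC sum capacity as the fundamental limit on what the honest-but-curious relay can learn from $Y_r$. Reliability at the legitimate users is already secured by the compute-and-forward construction above: provided $R_s + R_d \le R_{CF}$, the relay recovers $L-1$ linearly independent lattice combinations, and each user inverts the full-rank system $A_\ell T = \tilde Y_\ell$ to obtain every $t_\ell$ and hence every $w_\ell$ with vanishing error probability. The substance of the proof lies in verifying the weak secrecy inequality $\tfrac{1}{n} I(\mathcal{W};Y_r)\to 0$ for every $\mathcal{W}\in\mathcal{P}(\{W_1,\dots,W_L\})$.

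For the full tuple $\mathcal{W}=\{W_1,\dots,W_L\}$, I would use the standard decomposition
\begin{equation}
  I(W_1,\dots,W_L;Y_r) \;\le\; I(T;Y_r) - I(T;Y_r \mid W_1,\dots,W_L),
\end{equation}
valid because each $W_\ell$ is a deterministic function of the codeword tuple $T=(T_1,\dots,T_L)$ via the bin assignment. For the first term, the dither together with the crypto-lemma argument of \cite{Nazer2011} makes each $X_\ell$ uniform over $\mathcal{V}_C$ with second moment $P$, so the Gaussian MAC bound yields $I(T;Y_r) \le \tfrac{n}{2}\log_2\det(\idmat{\eta_R} + P H \transpose{H}) + n\delta(n)$. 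For the second term, conditioning on the messages collapses each user's effective codebook to its bin of size $2^{\lfloor nR_d\rfloor}$; choosing $R_d$ so that $L R_d$ sits just below the MAC sum capacity then permits a joint-typicality and Fano argument showing that, given the messages, the relay can recover $T$ from $Y_r$, which gives $I(T;Y_r \mid W_1,\dots,W_L) \ge n L R_d - n\delta(n)$.

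Balancing the two bounds, I would set $R_d = \tfrac{1}{2L}\log_2\det(\idmat{\eta_R} + P H \transpose{H})$; together with the reliability constraint $R_s + R_d \le R_{CF}$ this yields $L R_s \le L R_{CF} - \tfrac{1}{2}\log_2\det(\idmat{\eta_R} + P H \transpose{H})$, with the outer $\max\{0,\cdot\}$ covering the trivial regime where the relay's decoding margin already exceeds the computation rate. For a proper subset $\mathcal{W}\subsetneq\{W_1,\dots,W_L\}$ I would treat the codewords corresponding to indices not in $\mathcal{W}$ as additional side information at the relay; by independence of the message choices and uniformity of the bin map the same decomposition reappears with a smaller conditional MAC on the right-hand side, and the chosen $R_d$ continues to suppress the leakage to zero.

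The main obstacle is the lower bound $I(T;Y_r \mid W_1,\dots,W_L) \ge n L R_d - n\delta(n)$. It demands that, after random binning of a good nested-lattice code, the $L$ sub-codes of rate $R_d$ per user living inside the bins jointly behave like a capacity-approaching nested-lattice MAC code at the sum rate $\tfrac{1}{2}\log_2\det(\idmat{\eta_R}+P H \transpose{H})$; this does not follow from the single-user Erez--Zamir achievability and requires a dedicated appeal to the compute-and-forward/nested-lattice MAC achievability used in Section~\ref{sec:nested-lattice-code}. A secondary nuisance is ensuring that the dither and the $\bmod\,\Lambda_C$ reduction leave $Y_r$ with the claimed Gaussian MAC distribution, which I would dispatch by the classical dither/crypto-lemma step of \cite{Nazer2011}.
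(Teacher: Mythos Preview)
Your approach is essentially the paper's: both arguments pivot on (i) Fano's inequality to show that, conditioned on the messages, the relay can recover the codewords because the per-user bin rate $R_d$ is chosen so that $LR_d$ sits at the MAC sum capacity, and (ii) the Gaussian max-entropy bound to cap $I(X_1,\dots,X_L;Y_r)$ by $\tfrac{n}{2}\log_2\det(I_{\eta_R}+PHH')$. Your mutual-information identity $I(W;Y_r)=I(T;Y_r)-I(T;Y_r\mid W)$ and the paper's conditional-entropy route $H(W\mid Y_r)\ge H(X\mid Y_r)-n\delta(n)=H(X)-I(X;Y_r)-n\delta(n)$ are the same calculation in different notation. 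The one place the paper is cleaner is the subset case: instead of redoing the bound for each $\mathcal{W}$ with the complementary codewords as side information, the paper simply observes via the chain rule and non-negativity of mutual information that $I(W_1,\dots,W_L;Y_1,\dots,Y_{\eta_R})\ge I(\mathcal{W};\mathcal{Y})$ for every $\mathcal{W}\subseteq\{W_1,\dots,W_L\}$ and every $\mathcal{Y}\subseteq\{Y_1,\dots,Y_{\eta_R}\}$, so driving the full-tuple leakage to zero automatically handles all subsets. Your self-identified ``main obstacle'' --- that the randomly binned sub-codebooks must be jointly decodable at the relay --- is precisely the Fano step the paper invokes with a citation to the wiretap literature, and is treated at the same level of rigor there.
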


\subsection{Power Allocation}

Every source node has a power constraint $\norm{x_\ell} \leq nP$.
However, it is not always optimal to send at full power and therefore the transmit power for each source node needs to be optimized.
We define a diagonal matrix $P_{\text{tr}} = \diag(\sqrt{P_1},\dots,\sqrt{P_L})$ which contains the square roots of the individual transmit powers of the source nodes.
Further, we define the effective channel as $\tilde{H} = H P_{\text{tr}}$.
This results in the following optimization problem
\begin{equation}
  \max_{P_\ell \leq P} L R_{\text{CF}} - \tfrac{1}{2} \log_2 \det( \idmat{\eta_R} + HP_{\text{tr}} \transpose{P}_{\text{tr}}\transpose{H} ),
\end{equation}
where
\begin{equation}
  R_{CF} = \min_{\ell \in \{1,\dots,L-1\}} -\tfrac{1}{2} \log_2(\transpose{a}_\ell V D \transpose{V} a_\ell),
\end{equation}
where $V$ is the right singular matrix of $\tilde{H}$ and $D$ is a diagonal matrix with elements
\begin{equation}
  D_{ii} = \begin{cases}
    \tfrac{1}{\lambda_i + 1} & i \leq \rank(\tilde{H}) \\
    1 & i > \rank(\tilde{H})
  \end{cases},
\end{equation}
where $\lambda_i$ is the $i$-th eigenvalue of $\transpose{\tilde{H}}\tilde{H}$.

This problem is hard to solve because of the non-linearity and non-convexity.
The scope of this paper does not include an analytic result or an algorithm.
For the simulations we use a grid search algorithm to obtain the optimal power allocation.

\section{SISO and MISO Multi-Way Relay Channel}
\label{sec:siso-miso-l-user}

In this section we provide the achievable secrecy rates for the following cases:
\begin{itemize}
\item single antenna at the source nodes and the relay; \gls{SISO},
\item multiple antennas at the source nodes and single antenna at the relay; \gls{MISO}.
\end{itemize}
These can be written as special cases of the \gls{SIMO} case.

\subsection{SISO}

The results for the \gls{SISO} case can be derived directly from Theorem \ref{theorem:secrecy-rate-region-simo-l-user} where the channel matrix reduces to a vector.
\begin{corollary}
  \label{corollary:secrecy-rate-region-siso-l-user}
  Consider a multi-way relay channel with $L$ users and single antennas at all nodes.
  The channel from user $\ell$ to the relay is characterized by the coefficient $h_\ell \in \setRealNumbers$ with $h = \transpose{(h_1,\dots,h_L)}$.
  All nodes can only communicate via the relay and have no direct links.
  Each node has a transmit power constraint $\norm{x_i}^2 \leq nP,~\forall i \in \{1,\dots,L,r\}$.
  Then the weak secrecy rate region is given by
  \begin{equation*}
    \label{eq:secrecy-rate-region-siso-l-user}
    L \cdot R_s \leq \max \left\{0, L R_{CF} - \tfrac{1}{2} \log_2 (1 + \norm{\tilde{h}}^2) \right\}
  \end{equation*}
  where
  \begin{equation*}
    R_{CF} = \min_{i \in \{1,\dots,L-1\}} \log_2^+ \bigg( \bigg( \norm{a_i}^2 - \frac{ (\transpose{\tilde{h}} a_i)^2 }{ 1 + \norm{\tilde{h}}^2 } \bigg)^{-1} \bigg).
  \end{equation*}
  Further, $\tilde{h} = \diag(\sqrt{P_1},\dots,\sqrt{P_L}) \cdot h$ is the effective channel and $P_\ell \leq P$ is the transmit power at user $\ell$.
\end{corollary}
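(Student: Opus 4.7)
The plan is to obtain Corollary \ref{corollary:secrecy-rate-region-siso-l-user} as a direct specialization of Theorem \ref{theorem:secrecy-rate-region-simo-l-user} to the single-antenna case, $\eta_R = 1$. Setting $\eta_R = 1$ collapses the channel matrix $H \in \setRealNumbers^{1 \times L}$ to the row vector $\transpose{h}$, and after incorporating the per-user power allocation via $P_{\text{tr}} = \diag(\sqrt{P_1},\dots,\sqrt{P_L})$ the effective channel becomes $\tilde{H} = \transpose{h}\,P_{\text{tr}} = \transpose{\tilde{h}}$ with $\tilde{h}_\ell = \sqrt{P_\ell}\, h_\ell$, matching the definition in the corollary statement.

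First I would simplify the \gls{MAC}-capacity penalty. Since $\tilde{H}$ is $1 \times L$, the product $\tilde{H}\transpose{\tilde{H}}$ collapses to the scalar $\norm{\tilde{h}}^2$, so $\det(\idmat{\eta_R} + \tilde{H}\transpose{\tilde{H}}) = 1 + \norm{\tilde{h}}^2$, and the subtracted term $\tfrac{1}{2}\log_2\det(\cdot)$ in Theorem \ref{theorem:secrecy-rate-region-simo-l-user} reduces to $\tfrac{1}{2}\log_2(1+\norm{\tilde{h}}^2)$, which is precisely the penalty appearing inside the $\max$ of the corollary.

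Next I would re-derive the compute-and-forward rate in this single-antenna setting. With $\eta_R = 1$ the optimal preprocessing formula \eqref{eq:optimal-preprocessing-matrix} produces scalar coefficients $b_\ell = \transpose{\tilde{h}} a_\ell / (1+\norm{\tilde{h}}^2)$. Substituting these into the SIMO rate expression \eqref{eq:computation-rate-simo}, the $b_\ell^2$ term and the cross term hidden in $\norm{\transpose{H} b_\ell - a_\ell}^2$ combine so that the denominator factors, after absorbing $P$ into $\tilde h$, as $\norm{a_\ell}^2 - (\transpose{\tilde{h}} a_\ell)^2/(1+\norm{\tilde{h}}^2)$, yielding
\begin{equation*}
R(\tilde{h}, a_\ell) = \tfrac{1}{2}\log_2^+\!\Bigl(\bigl(\norm{a_\ell}^2 - (\transpose{\tilde{h}} a_\ell)^2/(1+\norm{\tilde{h}}^2)\bigr)^{-1}\Bigr).
\end{equation*}
Taking the minimum over $\ell \in \{1,\dots,L-1\}$ subject to the same full-rank constraint on $A_\ell$ from \eqref{eq:optimize-coefficient-matrix} delivers $R_{CF}$ in the stated form, and plugging both simplifications into the bound of Theorem \ref{theorem:secrecy-rate-region-simo-l-user} reproduces the corollary.

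The only real obstacle is algebraic bookkeeping: carefully verifying that the substitution of the scalar $b_\ell$ into \eqref{eq:computation-rate-simo} makes the cross terms telescope into the single closed-form expression above, so that the $P$-dependence disappears once everything is rewritten in terms of $\tilde{h}$. Nothing else requires fresh work, since the encoding scheme of Section \ref{sec:system-model}, the random-binning argument that produces weak secrecy, the rate constraint $R_s + R_d \leq R_{CF}$, and the outer envelope $\max\{0,\cdot\}$ on $L R_s$ transfer verbatim from Theorem \ref{theorem:secrecy-rate-region-simo-l-user}: none of these steps ever exploits $\eta_R > 1$.
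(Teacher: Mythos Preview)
Your proposal is correct and follows exactly the route the paper takes: the paper does not give a separate proof but simply states that the SISO result ``can be derived directly from Theorem \ref{theorem:secrecy-rate-region-simo-l-user} where the channel matrix reduces to a vector,'' which is precisely your specialization $\eta_R=1$ together with the effective channel $\tilde h$. Your write-up is in fact more detailed than the paper's, since you explicitly carry out the reduction of $\det(I_{\eta_R}+\tilde H\tilde H')$ to $1+\|\tilde h\|^2$ and of $a_\ell' V D V' a_\ell$ to $\|a_\ell\|^2-(\tilde h' a_\ell)^2/(1+\|\tilde h\|^2)$.
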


\subsection{MISO}

For the \gls{MISO} case we assume no cooperation at the source nodes for choosing the beamforming vectors.
Therefore it is optimal to use \gls{MRT} in the direction of the channel.
This leaves us with an reduced optimization problem where we only have to choose the optimal power allocation.
The effective channel is
\begin{equation}
  \tilde{h} = \diag(\sqrt{P_1},\dots,\sqrt{P_L}) \cdot \transpose{(\transpose{h}_1 \omega_1,\dots,\transpose{h}_L \omega_L)},
\end{equation}
where $h_\ell$ is the channel vector from user $\ell$ to the relay and $\omega_\ell$ is the beamforming vector with $\norm{\omega_\ell} \leq 1$ for user $\ell$.
Using \gls{MRT} we get
\begin{equation}
  \omega_\ell = \tfrac{h_\ell}{\norm{h_\ell}} \text{ and } \transpose{h}_\ell \omega_\ell = \norm{h_\ell}. 
\end{equation}
The secrecy rate is now equivalent to the one in the \gls{SISO} case and we get the following corollary.
\begin{corollary}
  \label{corollary:secrecy-rate-region-miso-l-user}
  Consider a multi-way relay channel with $L$ users and $\eta_T$ antennas at the user nodes.
  The relay is equipped with a single antenna.
  The channel from user $\ell$ to the relay is characterized by the vector $h_\ell \in \setRealNumbers^{\eta_T}$.
  All nodes can only communicate via the relay and have no direct links.
  Each node has a transmit power constraint $\norm{x_i}^2 \leq nP,~\forall i \in \{1,\dots,L,r\}$.
  Then the weak secrecy rate region is given by
  \begin{equation*}
    \label{eq:secrecy-rate-region-miso-l-user}
    L \cdot R_s \leq \max \left\{0, L R_{CF} - \tfrac{1}{2} \log_2 (1 + \norm{\tilde{h}}^2) \right\}
  \end{equation*}
  where 
  \begin{equation*}
    R_{CF} = \min_{i \in \{1,\dots,L-1\}} \log_2^+ \bigg( \bigg( \norm{a_i}^2 - \frac{ (\transpose{\tilde{h}} a_i)^2 }{ 1 + \norm{\tilde{h}}^2 } \bigg)^{-1} \bigg).
  \end{equation*}
  Further, $\tilde{h} = \diag(\sqrt{P_1},\dots,\sqrt{P_L}) \cdot \transpose{(\norm{h_1},\dots,\norm{h_L})}$ is the effective channel and $P_\ell \leq P$ is the transmit power at user $\ell$.
\end{corollary}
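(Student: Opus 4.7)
The plan is to reduce the MISO multi-way relay channel to the SISO case already handled in Corollary~\ref{corollary:secrecy-rate-region-siso-l-user}, by fixing a single transmit beamforming vector per user and showing that the resulting channel seen at the relay is structurally identical to a SISO multi-way relay channel. Concretely, I would let user $\ell$ transmit $x_\ell = \omega_\ell \cdot s_\ell$, where $s_\ell \in \setRealNumbers^n$ is a scalar lattice codeword drawn from the same nested lattice chain as in Section~\ref{sec:nested-lattice-code}, and $\omega_\ell \in \setRealNumbers^{\eta_T}$ with $\norm{\omega_\ell} \leq 1$ is the per-user beamforming vector. Under this substitution, the signal at the single-antenna relay reads $y_r = \sum_{\ell=1}^L (\transpose{h}_\ell \omega_\ell)\, s_\ell + z_r$, which is formally a SISO L-user multi-way MAC with effective coefficients $g_\ell = \transpose{h}_\ell \omega_\ell$.

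With this reduction in place, the second step is to apply Corollary~\ref{corollary:secrecy-rate-region-siso-l-user} verbatim to the effective scalar channel $g = \transpose{(g_1,\dots,g_L)}$. This immediately yields an achievable secrecy rate of the claimed form, but with $\tilde{h}$ replaced by $\tilde{g} = \diag(\sqrt{P_1},\dots,\sqrt{P_L})\, g$. The power constraint $\norm{x_\ell}^2 \leq nP_\ell$ is preserved because $\norm{\omega_\ell} \leq 1$ guarantees that the signal energy after beamforming is bounded by the scalar codeword energy.

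The final step, and the main obstacle, is to justify the MRT choice $\omega_\ell = h_\ell/\norm{h_\ell}$ without invoking user cooperation. Since the users do not coordinate their beamformers, user $\ell$ can affect only the coefficient $g_\ell$ and not any $g_j$ with $j \neq \ell$. The achievable rate expression $L R_s \leq L R_{CF} - \tfrac{1}{2}\log_2(1 + \norm{\tilde{g}}^2)$ is increasing in $|g_\ell|$ through the computation rate (the term $\transpose{a}_i V D \transpose{V} a_i$ in $R_{CF}$ decreases when the effective channel gain grows) but only sub-logarithmically increasing through the MAC sum-rate subtrahend, so each user maximizes its own $|g_\ell|$, which is achieved by matching to $h_\ell$ and gives $\transpose{h}_\ell \omega_\ell = \norm{h_\ell}$. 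Substituting this into the expression obtained from the SISO corollary replaces $\tilde{g}$ by $\tilde{h} = \diag(\sqrt{P_1},\dots,\sqrt{P_L})\, \transpose{(\norm{h_1},\dots,\norm{h_L})}$ and completes the derivation of the stated secrecy rate region.
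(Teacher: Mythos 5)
Your overall route is the same as the paper's: fix a rank-one transmit strategy $x_\ell = \omega_\ell s_\ell$, observe that the single-antenna relay then sees a scalar $L$-user multi-way MAC with effective coefficients $g_\ell = \transpose{h}_\ell\omega_\ell$, note that $\norm{\omega_\ell}\leq 1$ preserves the per-node power constraint, and invoke the SISO result (Corollary~\ref{corollary:secrecy-rate-region-siso-l-user}) for the effective channel. That is precisely the reduction the paper performs.

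The one step that does not hold up is your justification of MRT. You claim the secrecy rate is increasing in $|g_\ell|$ because the gain in $R_{CF}$ dominates the logarithmic growth of the MAC sum capacity, so each user should greedily maximize its own effective gain. This monotonicity is false in general, and the paper's own power-allocation discussion (the $h=\transpose{(0.8,0.5)}$ example) is a counterexample: shrinking one user's effective gain can \emph{increase} the secrecy rate, because $R_{CF}$ depends on how well the integer vector $a_i$ approximates the effective channel, and making the coefficients more nearly equal can raise $R_{CF}$ while simultaneously lowering the MAC sum capacity. So a per-user greedy argument does not establish optimality of MRT. Fortunately you do not need it: the corollary is an achievability statement, so it suffices that MRT is a feasible choice; and to see that nothing is lost relative to other non-cooperative beamformers, note that by Cauchy--Schwarz $|\transpose{h}_\ell\omega_\ell|\leq\norm{h_\ell}$, hence the set of effective gains reachable by any pair $(\omega_\ell,P_\ell)$ with $\norm{\omega_\ell}\leq 1$ and $P_\ell\leq P$ coincides, up to an irrelevant sign, with the set reachable by MRT combined with the power parameters $P_\ell\leq P$ that the corollary already leaves free. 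With that repair the argument matches the paper's.
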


  % Consider a multi-way relay channel with $L$ users and $\eta_R$ antennas at the relay.
  % The channel is characterized by the matrix $H$ whose entries $h_{ij}$ represent the fading coefficient from the $j$-th user to the $i$-th antenna at the relay.
  % All nodes can only communicate via the relay and have no direct links.
  % Each node has a transmit power constraint $\norm{x}^2 \leq nP$.
  % Then the weak secrecy rate region is given by
  % \begin{equation*}
  %   \label{eq:secrecy-rate-region-simo-l-user}
  %   L \cdot R_s \leq \max \left\{0, L R_{CF} - \tfrac{1}{2} \log_2 \det (\idmat{\eta_R} + P H \transpose{H}) \right\}
  % \end{equation*}
  % where 
  % \begin{equation*}
  %   R_{CF} = \min_{\ell \in \{1,\dots,L-1\}} R(H,a_\ell)
  % \end{equation*}
  % with $a_\ell$ chosen in programming problem \cref{eq:optimize-coefficient-matrix}.

  % \begin{align}
  %   &L \cdot R_s \leq \min_{\ell \in \{1,\dots,L-1\}} \max \Big\{ 0, \\
  %   &\log_2^+ \bigg( \bigg( \norm{a_\ell}^2 - \frac{ (\transpose{\tilde{h}} a_\ell)^2 }{ 1 + \norm{\tilde{h}}^2 } \bigg)^{-1} \bigg) - \frac{1}{2} \log_2 (1 + \norm{\tilde{h}}^2 ) \bigg\} \nonumber
  % \end{align}
  % where $\tilde{h} = \diag(\sqrt{P_1},\dots,\sqrt{P_L}) \cdot \transpose{(\norm{h_1},\dots,\norm{h_L})}$ is the effective channel and $h_i$ are the channel vectors from node $i$ to the relay.

\section{2-User Case: Two-Way Relay Channel}
\label{sec:2-user-case}

In this section we consider the special case of two users which is also known as the two-way relay channel.
We restrict ourselves to the SISO case where all nodes are equipped with single antennas.
% TODO: references
This system model is quite common in literature and one can find a lot of work on the achievable rate without secrecy \cite{Wilson2010,Zhang2009,Popovski2007} as well as with secrecy \cite{Tekin2008,Pierrot2011,He2013a,ElGamal2013,Bloch2013}.
Most of the work models the two-way relay channel as a wiretap channel where the second user is helping the first user by jamming the eavesdropper, i.e., the relay \cite{He2008,He2014}.
This differs from our work because we assume that both users transmit a secure message simultaneously.
Further, most of the work so far considers the two-way relay channel without fading and those results cannot be directly extended to fading channels.

However, to be able to compare our result to existing schemes we relax our model to match the one assumed in \cite{He2008}.
That is, user 1 transmits a secure message to user 2 via an untrusted relay.
User 2 helps by sending a jamming signal to the relay.
The main difference to our more general model is the fact, that the message of the second user does not need to be secure.
Further, we assume that the channel coefficients are equal and set to 1.
With this relaxed model we get the following secrecy rate.
\begin{theorem}
  \label{th:secrecy-rate-2-user-wiretap}
  Consider a two-way relay channel with fading coefficients $h_1 = h_2 = 1$ and equal power constraint $P$.
  The following secrecy rate is achievable with a cooperative jammer
  \begin{align*}
    R_s &\leq \max \left\{ 0, \frac{1}{2} \log_2 \left( \frac{1}{2} + P \right) - \frac{1}{2} \log_2 \left( 1+ \frac{P}{1+P} \right) \right\}\\
    &= R_s^{\text{CF}}.
  \end{align*}
\end{theorem}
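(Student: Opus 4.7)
The plan is to specialize the general scheme developed for Theorem~\ref{theorem:secrecy-rate-region-simo-l-user} to the two-user SISO setting with $h_1=h_2=1$, but with one crucial modification dictated by the relaxed model of~\cite{He2008}: only the message $W_1$ is required to be confidential, so user~2 is free to act as a pure cooperative jammer. Concretely, user~1 uses the nested-lattice encoder of Section~\ref{sec:nested-lattice-code} with random binning at rates $R_s$ and $R_d$ (message and dummy rate respectively), while user~2 draws its lattice codeword $t_2$ uniformly from the entire fine codebook $\mathcal{L}=\Lambda_F\cap\mathcal{V}_C$ (no bin structure, no message). Both transmit with power $P$, and the relay applies the compute-and-forward decoder with the natural coefficient vector $a=(1,1)^{\top}$.

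First I would verify reliability. Plugging $\tilde h=\sqrt{P}(1,1)^{\top}$ and $a=(1,1)^{\top}$ into Corollary~\ref{corollary:secrecy-rate-region-siso-l-user} gives $R_{CF}=\tfrac{1}{2}\log_2(\tfrac{1}{2}+P)$; hence whenever $R_s+R_d\le R_{CF}$ the relay recovers $\hat t=t_1+t_2\bmod\Lambda_C$ with vanishing error probability. In the broadcast phase user~2 obtains $\hat t$ through its own point-to-point link (whose capacity is $\tfrac{1}{2}\log_2(1+P)\ge R_{CF}$ for $P\ge 1/2$), subtracts its own $t_2$ modulo $\Lambda_C$ to recover $t_1$, and reads off $w_1$ from the bin index. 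This delivers the reliability half with the single constraint $R_s+R_d\le R_{CF}$.

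Next I would establish the secrecy half by a Wyner-style random-binning argument at user~1, mirroring the appendix of Theorem~\ref{theorem:secrecy-rate-region-simo-l-user}. Writing $H(W_1\mid Y_r)=H(T_1\mid Y_r)-H(T_1\mid W_1,Y_r)$ and using
\begin{equation*}
H(T_1\mid Y_r)\ge n(R_s+R_d)-I(T_1;Y_r)-n\delta(n),
\end{equation*}
the key is to show (i) $I(T_1;Y_r)\le n\bigl[\tfrac{1}{2}\log_2(1+\tfrac{P}{1+P})\bigr]+n\delta(n)$, which follows because $X_2$ is nearly Gaussian of power $P$ (by the crypto-lemma applied to the dithered mod-$\Lambda_C$ codeword), so that, treating $X_2$ as Gaussian noise, $I(T_1;Y_r)\le I(X_1;Y_r)\le\tfrac{n}{2}\log_2\tfrac{1+2P}{1+P}$; and (ii) $H(T_1\mid W_1,Y_r)\le n\delta(n)$ provided each bin is itself a good code for the channel $X_1\mapsto X_1+X_2+Z$, which holds whenever $R_d\le\tfrac{1}{2}\log_2(1+\tfrac{P}{1+P})-\delta$. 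Setting $R_d=\tfrac{1}{2}\log_2(1+\tfrac{P}{1+P})$ then yields $\tfrac{1}{n}I(W_1;Y_r)\to 0$, and subtracting from the reliability bound gives $R_s\le R_{CF}-R_d$, which is exactly the claim after the nonnegativity projection.

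The main obstacle is step~(ii): showing that a random bin of the fine lattice, of rate $R_d=\tfrac{1}{2}\log_2(1+P/(1+P))$, is decodable at the relay when the jammer's signal $X_2=[t_2+u_2]\bmod\Lambda_C$ is treated as noise. Unlike the standard wiretap channel, this ``noise'' is neither independent nor Gaussian in the literal sense; one has to invoke the Erez--Zamir dithering argument to replace $X_2$ by a Gaussian of the same second moment and then reuse the compute-and-forward achievability of~\cite{Nazer2011} applied to a single user against Gaussian interference, exactly as was done in the appendix for Theorem~\ref{theorem:secrecy-rate-region-simo-l-user}. Once this technical lemma is in place, collecting the two rate conditions and taking the maximum with $0$ produces the stated achievable rate $R_s^{\text{CF}}$.
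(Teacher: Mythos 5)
Your proposal is correct and follows essentially the same route as the paper's proof: a Wyner-style equivocation computation $H(W_1\mid Y_r)\ge H(X_1)-I(X_1;Y_r)-n\delta(n)$ with Fano's inequality handling $H(X_1\mid W_1,Y_r)$, the Erez--Zamir dithering result supplying $\lim_{n\to\infty}\tfrac{1}{n}I(X_2;X_2+Z_r)=\tfrac{1}{2}\log_2(1+P)$, and the combination with $R_s+R_d\le R_{CF}$ yielding $R_s\le R_{CF}-\tfrac{1}{2}\log_2\bigl(1+\tfrac{P}{1+P}\bigr)$. The only difference is presentational: you make explicit the constraint on $R_d$ needed for the Fano step, which the paper asserts by citing \cite{He2008}.
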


\begin{proof}
  The following proof does not necessarily require that the channel fading coefficients are 1.
  Therefore we provide the proof with fading coefficients and choose them to be 1 at the end to get result of Theorem \ref{th:secrecy-rate-2-user-wiretap}.
  The proof follows the same steps as the proof for the $L$-user multi-way relay channel, except that we have only the following secrecy requirement:
  \begin{equation}
    \lim_{n \to \infty} \tfrac{1}{n} H(W_1 \mid Y_r) \geq \lim_{n \to \infty} \tfrac{1}{n} H(W_1) = R_s. 
  \end{equation}
  We start by bounding the conditional entropy of the message $W_1$ given the received signal $Y_r$ at the relay,
  \begin{align}
    &H(W_1 \mid Y_r) \\
    &= H(W_1 \mid X_1,Y_r) + H(X_1 \mid Y_r) - H(X_1 \mid W_1,Y_r) \label{eq:proof-secrecy-rate-2-user-fano}\\
    &\geq H(X_1 \mid Y_r) - n \delta(n) \\
    &= H(X_1 \mid Y_r) - H(X_1) + H(X_1) - n \delta(n) \\
    &= H(X_1) - I(X_1 ; Y_r) - n \delta(n),
  \end{align}
  where we used Fano's inequality to bound the last term in \eqref{eq:proof-secrecy-rate-2-user-fano}.
  This is because the size of each bin is kept small enough such that given $W_1$, the eavesdropper/the relay can determine $X_1$ from its received signal $Y_r$ \cite{He2008}.
  
  We now need a lower bound on the mutual information between $X_1$ and $Y_r$.
  \begin{align}
    &I(X_1 ; Y_r) \\
    &= h(Y_r) - h(Y_r \mid X_1) \\
    &\leq n \cdot \tfrac{1}{2} \log_2 ( 2 \pi e (P h_1^2 + P h_2^2 + 1)) - h(h_2 X_2 + Z_r) \label{eq:proof-secrecy-rate-2-user-mutual-information}
  \end{align}
  The last inequality follows from the fact that a normal distribution maximizes the entropy under an average power constraint.
  The last term can be expressed as follows,
  \begin{align}
    &h(h_2 X_2 + Z_r) \\
    &= h(h_2 X_2 + Z_r \mid h_2 X_2) + I(h_2 X_2 ; h_2 X_2 + Z_r) \\
    &= h(Z_r) + I(h_2 X_2 ; h_2 X_2 + Z_r), \label{eq:proof-secrecy-rate-2-user-entropy}
  \end{align}
  where $h(Z_r) = \frac{1}{2} \log_2 ( 2 \pi e )$.
  From \cite{Erez2004} we know that
  \begin{equation}
    \label{eq:proof-secrecy-rate-2-user-erez-zamir-bound}
    \lim_{n \to \infty} \tfrac{1}{n} I(h_2 X_2 ; h_2 X_2 + Z_r) = \tfrac{1}{2} \log_2 ( 1 + h_2^2 P ).
  \end{equation}
  If we combine \eqref{eq:proof-secrecy-rate-2-user-mutual-information}, \eqref{eq:proof-secrecy-rate-2-user-entropy} and \eqref{eq:proof-secrecy-rate-2-user-erez-zamir-bound} we get
  \begin{align}
    &\lim_{n \to \infty} \tfrac{1}{n} H(X_1 \mid Y_r) \\
    &\geq \lim_{n \to \infty} \tfrac{1}{n} \left[ H(X_1) - I(X_1 ; Y_r) - n \delta(n) \right] \\
    &\geq \lim_{n \to \infty} \tfrac{1}{n} [ H(X_1) - n \cdot \tfrac{1}{2} \log_2 ( 2 \pi e (h_1^2 P + h_2^2 P + 1)) \nonumber\\
    &\hspace*{5mm}+ h(Z_r) + I(h_2 X_2 ; h_2 X_2 + Z_r) ]\\
    &= (R_s + R_d) - \tfrac{1}{2} \log_2 (h_1^2 P + h_2^2 P + 1) \nonumber\\
    &\hspace*{5mm}+ \lim_{n \to \infty} \tfrac{1}{n} I(h_2 X_2 ; h_2 X_2 + Z_r) \\
    &= (R_s + R_d) - \tfrac{1}{2} \log_2 \left( 1 + \frac{h_1^2 P}{1 + h_2^2 P} \right)
  \end{align}
  Because $R_s + R_d \leq R_{\text{CF}}$ we get the following result,
  \begin{equation}
    R_s \leq R_{\text{CF}} - \tfrac{1}{2} \log_2 \left( 1 + \frac{h_1^2 P}{1 + h_2^2 P} \right).
  \end{equation}
  Choosing $h_1 = h_2 = 1$ we get the result of Theorem \ref{th:secrecy-rate-2-user-wiretap}.
  This concludes the proof.
\end{proof}

% We investigated the two-way relay channel in a previous work \cite{Richter2013} and provide here only the result, which is needed for later investigation and comparison. 

% \begin{corollary}[Achievable Secrecy Rate]
%   \label{theorem:secrecy-rate-region-2-user}
%   Consider a two-way relay-channel with fading coefficients $h_1$ and $h_2$ and $h = \transpose{(h_1,h_2)}$.
%   Each node has a transmit power constraint $\norm{x_i}^2 \leq nP,~\forall i \in \{1,2,r\}$.
%   With $\tilde{h} = \transpose{(\sqrt{P_1} h_1, \sqrt{P_2} h_2)}$ and $P_i \leq P$, the weak secrecy rate region is given by
%   \begin{equation*}
%     \label{eq:secrecy-rate-region}
%     2 R_s \leq \max \left\{0, 2 R_{CF} - \tfrac{1}{2} \log_2 (1 + P_1 h_1^2 + P_2 h_2^2) \right\}
%   \end{equation*}
%   where 
%   \begin{equation*}
%     R_{CF} = \max_{\substack{a \in \setIntegers\\a_1 \neq 0\\a_2 \neq 0}} \frac{1}{2} \log_2^+ \left( \left( \norm{a}^2 - \frac{(\transpose{\tilde{h}}a)^2}{1+\norm{\tilde{h}}^2} \right)^{-1} \right).
%   \end{equation*} 
% \end{corollary}

In the following we compare our result for the two-way relay channel to other approaches and illustrate the results in Figure \ref{fig:power-rate-region}.
The first result is a scheme by He and Yener which provides a weak secrecy result \cite{He2008}, i.e.,
\begin{equation}
  R_s \leq \max\{0, \tfrac{1}{2} \log_2 ( \tfrac{1}{2} + P ) - 1 \} = R_s^{\text{HS}}.
\end{equation}
This result is extended in \cite{He2013a} for strong secrecy.
The achievable secrecy rate is shown to be the same as for weak secrecy by utilizing a hash function.
Please note that we can extend our result as well with a hash function to get a result for strong secrecy.
This will be the topic of future research and is not considered in this paper.

The second scheme is provided by Kashyap et al. in \cite{Kashyap2012}.
The achievable secrecy rates are
\begin{equation}
  R_s \leq \max\{0, \tfrac{1}{2} \log_2 (P) - 1 - \log_2(e)\} = R_s^{\text{KP}}
\end{equation}
and
\begin{equation}
  R_s \leq \max\{0, \tfrac{1}{2} \log_2 ( \tfrac{1}{2} + P ) - \log_2 (2 e) \} = R_s^{\text{KS}}
\end{equation}
for perfect and strong secrecy, respectively.
Observe that $R_{\text{CF}} \geq R_s^{\text{CF}} \geq R_s^{\text{HS}} \geq R_s^{\text{KS}} \geq R_s^{\text{KP}}$.

% \begin{enumerate}
% \item Our scheme: The point $(P,\max \{0, \tfrac{1}{4} \log_2 (1+ \tfrac{2P}{\sigma^2}) - \tfrac{1}{2} \})$ in the power-rate region is achievable with \emph{weak} secrecy.
% \item He et al.\ \cite{He2013a}: The point $(P,\max \{0, \tfrac{1}{4} \log_2 (\tfrac{1}{2} + \tfrac{P}{\sigma^2}) - \tfrac{1}{2} \})$ in the power-rate region is achievable with \emph{strong} secrecy.
% \item Kashyap et al.\ \cite{Kashyap2012}: The point $(P,\tfrac{1}{2} \log_2 (P/(4e^2\sigma^2)))$ in the power-rate region is achievable with \emph{perfect} secrecy.
% \end{enumerate}
% Because the results of He et al. only consider one user to transmit a message while the other user sends a cooperative jamming signal, we assume time sharing for the comparison.
%
\begin{figure}
  \centering
  \tikzsetnextfilename{power-rate-region}

\begin{tikzpicture}[>=latex]

  \tikzpicturedependsonfile{figures/power-rate-region.tikz}

  \pgfplotstableread{./figures/power-rate-region.csv}\powerrateregion
  
  \begin{axis} [
    grid=major,
    xmin=0,
    xmax=30,
    ymin=0,
    ymax=6,
    xtick={0,2,...,30},
    ytick={0,0.2,...,6.0},
    legend style={
      cells={anchor=west},
      legend pos=north west,
    },
    tick label style={
      font=\tiny
    },
    ylabel={Achievable secrecy rate $R_s$ [bit/cu]},
    xlabel={$P/\sigma^2$ [dB]},
    ]

    \addplot[draw=black,semithick] table[x=snr,y=nazer] {\powerrateregion};
    \addlegendentry{\footnotesize Nazer and Gastpar (no secrecy)}

    \addplot[draw=black,mark=triangle,mark repeat=20,semithick] table[x=snr,y=richter] {\powerrateregion};
    \addlegendentry{\footnotesize Our scheme (weak secrecy)}

    \addplot[draw=red,dashdotted,semithick] table[x=snr,y=he] {\powerrateregion};
    \addlegendentry{\footnotesize Scheme by He et al. (strong secrecy)}
    
    \addplot[draw=blue,dashed,semithick] table[x=snr,y=kashyap_strong] {\powerrateregion};
    \addlegendentry{\footnotesize Scheme by Kashyap et al. (strong secrecy)}

    \addplot[draw=brown,mark=o,mark repeat=20,semithick] table[x=snr,y=kashyap_perfect] {\powerrateregion};
    \addlegendentry{\footnotesize Scheme by Kashyap et al. (perfect secrecy)}

  \end{axis}

\end{tikzpicture}
  \caption{Achievable secrecy rate of the two-way relay channel with $h=\transpose{(1,1)}$ and $\sigma^2 = 1$ for different schemes.}
  \label{fig:power-rate-region}
\end{figure}
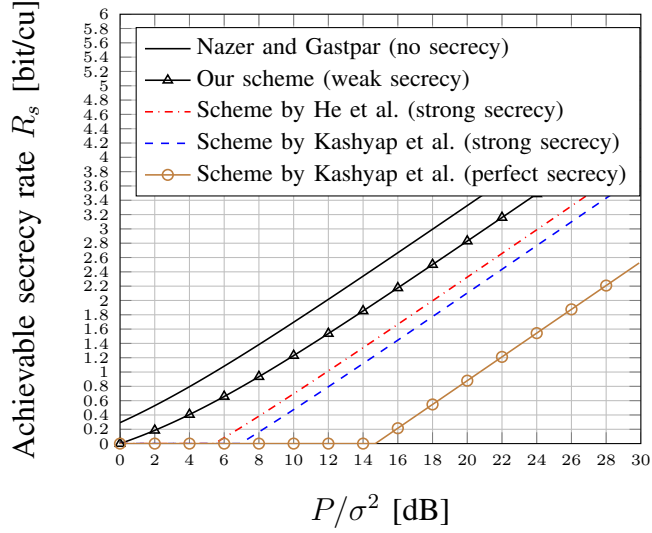
%
% We compare all three schemes for $\sigma^2=1$ in Figure \ref{fig:power-rate-region}.
% It is interesting to note that the scheme of Kashyap et al.\ achieves a higher perfect secrecy rate for large transmit power $P$ than the weak secrecy scheme proposed in this paper.
% However, for low SNR the latter performs better than the former.

\section{Discussion}
\label{sec:discussion}

In this section we discuss and illustrate Theorem \ref{theorem:secrecy-rate-region-simo-l-user}, Corollary \ref{corollary:secrecy-rate-region-siso-l-user} and Corollary \ref{corollary:secrecy-rate-region-miso-l-user}.
It is interesting to note, that the achievable secrecy rate is the difference between the achievable compute-and-forward sum rate and the sum capacity of the \gls{MAC}.
This means, we get a secrecy rate greater than zero if the compute-and-forward rate region is larger than the \gls{MAC} capacity region.
This is illustrated in Figure \ref{fig:mac-region-0.9-0.8-10} for the two user case, where we plotted the achievable rate regions for the channel coefficients $\transpose{(0.9,0.8)}$ and a transmit power constraint of $P = 10$ Watt.
The dots in Figure \ref{fig:mac-region-0.9-0.8-10} mark the corner points of the achievable compute-and-forward rate regions for different linear combinations.
The solid line illustrates the border of the \gls{MAC} capacity region.
\begin{figure}
  \centering
  \tikzsetnextfilename{mac-region-0.9-0.8-10}
\begin{tikzpicture}[>=latex,scale=0.8]

  \tikzpicturedependsonfile{figures/mac-region-0.9-0.8-10.tikz}

  \draw[color=black!30,xstep=0.502218149204879,ystep=0.502218149204879] (0,0) grid (8.53770853648295,7.53327223807319);

  \tikzset{tick-node/.style={fill=white,inner sep=1pt}}
  \draw[<->,thick] (0,8.0) |- (8.9,0);

  \node[rotate=90] at (-1.0,4.0) {$R_2$ [bit/cu]};
  \node[] at (4.45,-1.0) {$R_1$ [bit/cu]};

  \draw (0.502218149204879,2pt) -- (0.502218149204879,-2pt) node[tick-node,anchor=north,yshift=-1pt] {\tiny 0.1};
  \draw (1.00443629840976,2pt) -- (1.00443629840976,-2pt) node[tick-node,anchor=north,yshift=-1pt] {\tiny 0.2};
  \draw (1.50665444761464,2pt) -- (1.50665444761464,-2pt) node[tick-node,anchor=north,yshift=-1pt] {\tiny 0.3};
  \draw (2.00887259681952,2pt) -- (2.00887259681952,-2pt) node[tick-node,anchor=north,yshift=-1pt] {\tiny 0.4};
  \draw (2.51109074602440,2pt) -- (2.51109074602440,-2pt) node[tick-node,anchor=north,yshift=-1pt] {\tiny 0.5};
  \draw (3.01330889522927,2pt) -- (3.01330889522927,-2pt) node[tick-node,anchor=north,yshift=-1pt] {\tiny 0.6};
  \draw (3.51552704443415,2pt) -- (3.51552704443415,-2pt) node[tick-node,anchor=north,yshift=-1pt] {\tiny 0.7};
  \draw (4.01774519363903,2pt) -- (4.01774519363903,-2pt) node[tick-node,anchor=north,yshift=-1pt] {\tiny 0.8};
  \draw (4.51996334284391,2pt) -- (4.51996334284391,-2pt) node[tick-node,anchor=north,yshift=-1pt] {\tiny 0.9};
  \draw (5.02218149204879,2pt) -- (5.02218149204879,-2pt) node[tick-node,anchor=north,yshift=-1pt] {\tiny 1.0};
  \draw (5.52439964125367,2pt) -- (5.52439964125367,-2pt) node[tick-node,anchor=north,yshift=-1pt] {\tiny 1.1};
  \draw (6.02661779045855,2pt) -- (6.02661779045855,-2pt) node[tick-node,anchor=north,yshift=-1pt] {\tiny 1.2};
  \draw (6.52883593966343,2pt) -- (6.52883593966343,-2pt) node[tick-node,anchor=north,yshift=-1pt] {\tiny 1.3};
  \draw (7.03105408886831,2pt) -- (7.03105408886831,-2pt) node[tick-node,anchor=north,yshift=-1pt] {\tiny 1.4};
  \draw (7.53327223807319,2pt) -- (7.53327223807319,-2pt) node[tick-node,anchor=north,yshift=-1pt] {\tiny 1.5};
  \draw (8.03549038727807,2pt) -- (8.03549038727807,-2pt) node[tick-node,anchor=north,yshift=-1pt] {\tiny 1.6};
  \draw (8.53770853648295,2pt) -- (8.53770853648295,-2pt) node[tick-node,anchor=north,yshift=-1pt] {\tiny 1.7};

  \draw (2pt,0.502218149204879) -- (-2pt,0.502218149204879) node[tick-node,anchor=east,xshift=-1pt] {\tiny 0.1};
  \draw (2pt,1.00443629840976) -- (-2pt,1.00443629840976) node[tick-node,anchor=east,xshift=-1pt] {\tiny 0.2};
  \draw (2pt,1.50665444761464) -- (-2pt,1.50665444761464) node[tick-node,anchor=east,xshift=-1pt] {\tiny 0.3};
  \draw (2pt,2.00887259681952) -- (-2pt,2.00887259681952) node[tick-node,anchor=east,xshift=-1pt] {\tiny 0.4};
  \draw (2pt,2.51109074602440) -- (-2pt,2.51109074602440) node[tick-node,anchor=east,xshift=-1pt] {\tiny 0.5};
  \draw (2pt,3.01330889522927) -- (-2pt,3.01330889522927) node[tick-node,anchor=east,xshift=-1pt] {\tiny 0.6};
  \draw (2pt,3.51552704443415) -- (-2pt,3.51552704443415) node[tick-node,anchor=east,xshift=-1pt] {\tiny 0.7};
  \draw (2pt,4.01774519363903) -- (-2pt,4.01774519363903) node[tick-node,anchor=east,xshift=-1pt] {\tiny 0.8};
  \draw (2pt,4.51996334284391) -- (-2pt,4.51996334284391) node[tick-node,anchor=east,xshift=-1pt] {\tiny 0.9};
  \draw (2pt,5.02218149204879) -- (-2pt,5.02218149204879) node[tick-node,anchor=east,xshift=-1pt] {\tiny 1.0};
  \draw (2pt,5.52439964125367) -- (-2pt,5.52439964125367) node[tick-node,anchor=east,xshift=-1pt] {\tiny 1.1};
  \draw (2pt,6.02661779045855) -- (-2pt,6.02661779045855) node[tick-node,anchor=east,xshift=-1pt] {\tiny 1.2};
  \draw (2pt,6.52883593966343) -- (-2pt,6.52883593966343) node[tick-node,anchor=east,xshift=-1pt] {\tiny 1.3};
  \draw (2pt,7.03105408886831) -- (-2pt,7.03105408886831) node[tick-node,anchor=east,xshift=-1pt] {\tiny 1.4};
  \draw (2pt,7.53327223807319) -- (-2pt,7.53327223807319) node[tick-node,anchor=east,xshift=-1pt] {\tiny 1.5};

  \path[draw,blue] (0,7.2508) -- (2.6785,7.2508) -- (8.0000,1.9293) -- (8.0000,0);

  \node[circle,anchor=center,fill=red,minimum size=0.3em,inner sep=0,label=right:{\tiny (1, 2)}] at (0.1188,0.1188) {};
  \node[circle,anchor=center,fill=red,minimum size=0.3em,inner sep=0,label=right:{\tiny (0, 1)}] at (0.0000,1.9293) {};
  \node[circle,anchor=center,fill=red,minimum size=0.3em,inner sep=0,label=above:{\tiny (1, 0)}] at (2.6785,0.0000) {};

  \path[draw,dashed] (0,7.2415) -- (7.2415,7.2415) -- (7.2415,0);
  \node[circle,anchor=center,fill=red,minimum size=0.3em,inner sep=0,label=right:{\tiny (1, 1)}] at (7.2415,7.2415) {};

  \path[draw,dashed] (0,1.6241) -- (1.6241,1.6241) -- (1.6241,0);
  \node[circle,anchor=center,fill=red,minimum size=0.3em,inner sep=0,label=right:{\tiny (2, 1)}] at (1.6241,1.6241) {};

  \path[draw,dashed] (0,2.2193) -- (2.2193,2.2193) -- (2.2193,0);
  \node[circle,anchor=center,fill=red,minimum size=0.3em,inner sep=0,label=right:{\tiny (2, 2)}] at (2.2193,2.2193) {};

  \draw[<->] ($(2.6785,7.2508)!(7.2415,7.2415)!(8.0000,1.9293)$) -- node[above left] {$R_s$} (7.2415,7.2415);

\end{tikzpicture}
  \caption{MAC capacity region (blue line) with achievable compute-and-forward rates for different coefficient vectors $a$ (red dots) for $h = \transpose{(0.9,0.8)}$ and $P=10$ Watt.}
  \label{fig:mac-region-0.9-0.8-10}
\end{figure}
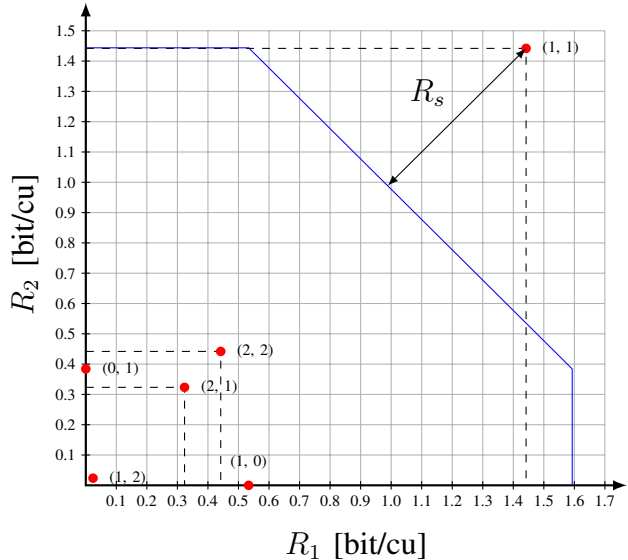
As one can see, a rate greater than the \gls{MAC} sum capacity is only achievable for a single coefficient vector, in this example $a = \transpose{(1,1)}$.
In general there can be at most one point or several linear dependent points outside the \gls{MAC} capacity region.
Otherwise the relay would be able to decode the single messages by solving a system of linear equations which contradicts the \gls{MAC} capacity definition.
From this illustration we see, that it is only possible to transmit secure messages via the relay, if the source nodes transmit at a rate outside the \gls{MAC} capacity.
This ensures that the relay cannot decode the single messages but only a linear combination, if the compute-and-forward rate is higher than the \gls{MAC} sum capacity.
In Figure \ref{fig:mac-region-0.9-0.8-10} it is optimal to transmit at full power.
In Figure \ref{fig:mac-region-0.8-0.5-10} we show that this is not always the case.
Therefore we plotted the achievable rate regions for the channel coefficients $\transpose{(0.8,0.5)}$ and a transmit power constraint of $P=10$ Watt.
\begin{figure}
  \centering
  \tikzsetnextfilename{mac-region-0.8-0.5-10-power-allocation}

\begin{tikzpicture}[>=latex,scale=0.8]

  \tikzpicturedependsonfile{figures/mac-region-0.8-0.5-10-power-allocation.tikz}

  \draw[color=black!30,xstep=0.554107704688271,ystep=0.554107704688271] (0,0) grid (8.31161557032407,5.54107704688271); 

  \tikzset{tick-node/.style={fill=white,inner sep=1pt}}
  \draw[<->,thick] (0,6.0) |- (8.8,0);

  \node[rotate=90] at (-1.0,3.0) {$R_2$ [bit/cu]};
  \node[] at (4.4,-1.0) {$R_1$ [bit/cu]};

\draw (0.554107704688271,2pt) -- (0.554107704688271,-2pt) node[tick-node,anchor=north,yshift=-1pt] {\tiny 0.1};
\draw (1.10821540937654,2pt) -- (1.10821540937654,-2pt) node[tick-node,anchor=north,yshift=-1pt] {\tiny 0.2};
\draw (1.66232311406481,2pt) -- (1.66232311406481,-2pt) node[tick-node,anchor=north,yshift=-1pt] {\tiny 0.3};
\draw (2.21643081875309,2pt) -- (2.21643081875309,-2pt) node[tick-node,anchor=north,yshift=-1pt] {\tiny 0.4};
\draw (2.77053852344136,2pt) -- (2.77053852344136,-2pt) node[tick-node,anchor=north,yshift=-1pt] {\tiny 0.5};
\draw (3.32464622812963,2pt) -- (3.32464622812963,-2pt) node[tick-node,anchor=north,yshift=-1pt] {\tiny 0.6};
\draw (3.87875393281790,2pt) -- (3.87875393281790,-2pt) node[tick-node,anchor=north,yshift=-1pt] {\tiny 0.7};
\draw (4.43286163750617,2pt) -- (4.43286163750617,-2pt) node[tick-node,anchor=north,yshift=-1pt] {\tiny 0.8};
\draw (4.98696934219444,2pt) -- (4.98696934219444,-2pt) node[tick-node,anchor=north,yshift=-1pt] {\tiny 0.9};
\draw (5.54107704688271,2pt) -- (5.54107704688271,-2pt) node[tick-node,anchor=north,yshift=-1pt] {\tiny 1};
\draw (6.09518475157098,2pt) -- (6.09518475157098,-2pt) node[tick-node,anchor=north,yshift=-1pt] {\tiny 1.1};
\draw (6.64929245625925,2pt) -- (6.64929245625925,-2pt) node[tick-node,anchor=north,yshift=-1pt] {\tiny 1.2};
\draw (7.20340016094753,2pt) -- (7.20340016094753,-2pt) node[tick-node,anchor=north,yshift=-1pt] {\tiny 1.3};
\draw (7.75750786563580,2pt) -- (7.75750786563580,-2pt) node[tick-node,anchor=north,yshift=-1pt] {\tiny 1.4};
\draw (8.31161557032407,2pt) -- (8.31161557032407,-2pt) node[tick-node,anchor=north,yshift=-1pt] {\tiny 1.5};

\draw (2pt,0.554107704688271) -- (-2pt,0.554107704688271) node[tick-node,anchor=east,xshift=-1pt] {\tiny 0.1};
\draw (2pt,1.10821540937654) -- (-2pt,1.10821540937654) node[tick-node,anchor=east,xshift=-1pt] {\tiny 0.2};
\draw (2pt,1.66232311406481) -- (-2pt,1.66232311406481) node[tick-node,anchor=east,xshift=-1pt] {\tiny 0.3};
\draw (2pt,2.21643081875309) -- (-2pt,2.21643081875309) node[tick-node,anchor=east,xshift=-1pt] {\tiny 0.4};
\draw (2pt,2.77053852344136) -- (-2pt,2.77053852344136) node[tick-node,anchor=east,xshift=-1pt] {\tiny 0.5};
\draw (2pt,3.32464622812963) -- (-2pt,3.32464622812963) node[tick-node,anchor=east,xshift=-1pt] {\tiny 0.6};
\draw (2pt,3.87875393281790) -- (-2pt,3.87875393281790) node[tick-node,anchor=east,xshift=-1pt] {\tiny 0.7};
\draw (2pt,4.43286163750617) -- (-2pt,4.43286163750617) node[tick-node,anchor=east,xshift=-1pt] {\tiny 0.8};
\draw (2pt,4.98696934219444) -- (-2pt,4.98696934219444) node[tick-node,anchor=east,xshift=-1pt] {\tiny 0.9};
\draw (2pt,5.54107704688271) -- (-2pt,5.54107704688271) node[tick-node,anchor=east,xshift=-1pt] {\tiny 1};

\path[draw,dashdotted] (0,5.0073) -- (4.1560,5.0073) -- (8.0000,1.1634) -- (8.0000,0);

\node[cross out,anchor=center,draw=black,minimum size=0.3em,inner sep=0,label=right:{\tiny (2, 1)}] at (2.4228,2.4228) {};
\node[cross out,anchor=center,draw=black,minimum size=0.3em,inner sep=0,label=above:{\tiny (1, 0)}] at (4.1560,0.0000) {};
\node[cross out,anchor=center,draw=black,minimum size=0.3em,inner sep=0,label=left:{\tiny (1, 1)}] at (4.9077,4.9077) {};
\node[cross out,anchor=center,draw=black,minimum size=0.3em,inner sep=0,label=right:{\tiny (0, 1)}] at (0.0000,1.1634) {};

\path[draw,blue] (0,5.0073) -- (3.5126,5.0073) -- (7.1135,1.4064) -- (7.1135,0);

\node[circle,anchor=center,fill=red,minimum size=0.3em,inner sep=0,label=right:{\tiny (0, 1)}] at (0.0000,1.4064) {};
\node[circle,anchor=center,fill=red,minimum size=0.3em,inner sep=0,label=above:{\tiny (1, 0)}] at (3.5126,0.0000) {};
\node[circle,anchor=center,fill=red,minimum size=0.3em,inner sep=0,label=right:{\tiny (1, 1)}] at (5.0073,5.0073) {};
\node[circle,anchor=center,fill=red,minimum size=0.3em,inner sep=0,label=right:{\tiny (2, 1)}] at (1.4335,1.4335) {};

\end{tikzpicture}
  \caption{MAC capacity region with achievable compute-and-forward rates for different coefficient vectors $a$ at full transmit power (dashdotted line and crosses) and at optimal transmit power $(7.7, 10.0)$ Watt (blue solid line and red dots) for $h = \transpose{(0.8,0.5)}$ and $P=10$ Watt.}
  \label{fig:mac-region-0.8-0.5-10}
\end{figure}
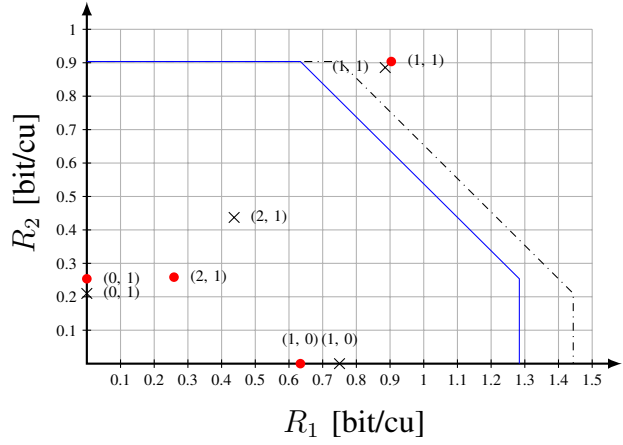
As one can see optimizing the power means reducing the transmit power for one user node.
Because the channel coefficients of the effective channel are then closer to each other, we achieve a higher computation rate for the coefficient $a = \transpose{(1,1)}$ while reducing the \gls{MAC} sum capacity at the same time.
We want to stress here the not common behavior that it is possible to increase the secrecy rate by reducing the transmit power.

\subsection{Achievability Of Positive Secrecy Rates}

The achievable compute-and-forward sum rate is the double of the computation rate at the relay.
The achievable computation rate depends highly on the channel coefficients because the compute-and-forward framework tries to approximate the real valued channel coefficients with integer valued network coding coefficients.
This means that there does not always exist a network coding coefficient vector with an achievable computation rate which is larger than the \gls{MAC} capacity.
We show in Figure \ref{fig:two-user-channel-realizations-5dB} and Figure \ref{fig:two-user-channel-realizations-20dB} the achievable secrecy rate for different channel realizations in a two user scenario, i.e., the two-way relay channel with a single antenna at all nodes.
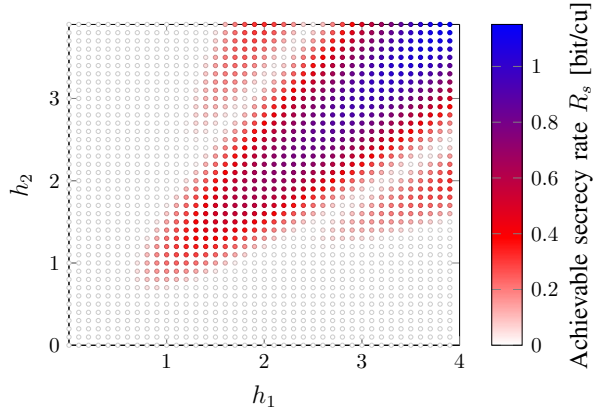
\begin{figure}
  \centering
  \tikzsetnextfilename{two-user-channel-realizations-5dB}

\begin{tikzpicture}[scale=0.8,>=latex]

  \tikzpicturedependsonfile{./figures/two-user-channel-realizations-5dB.tikz}

  \pgfplotsset{
    colormap={zeroone}{rgb(0)=(1,0,0); rgb(1)=(0,0,1)},
    colormap={custom01}{rgb(0)=(1,1,1); rgb(1)=(1,0,0); rgb(3)=(0,0,1)}
  }

  \pgfplotstableread{./figures/two-user-channel-realizations-5dB.csv}\twouserchannel
  
  \begin{axis} [
    xtick={1,2,...,4},
    xmin=0,
    xmax=4,
    width=8cm,
    legend style={
      cells={anchor=west},
      legend pos=north west,
    },
    tick label style={
      font=\small
    },
    view={0}{90},
    scatter,
    colorbar,
    colorbar style={
      colormap name=custom01,
      ylabel={Achievable secrecy rate $R_s$ [bit/cu]},
      ylabel style={yshift=-2.7cm},
    },
    % colorbar sampled line,
    ylabel={$h_2$},
    xlabel={$h_1$},
    % x label style={
    % at={(current axis.right of origin)},
    % anchor=north west  
    % }
    ]

    % \addplot3[only marks,mark options={mark size=1pt}] table[x=h1,y=h2,z=outside] {\twouserchannel};
    \addplot3[only marks,mark options={mark size=1pt}] table[x=h1,y=h2,z=Rs] {\twouserchannel};

  \end{axis}

  % \begin{axis}[
  %   ytick={0.1,0.2,...,0.5},
  %   yshift=-2.45cm,
  %   height=4cm,
  %   width=8cm,
  %   xmin=0,
  %   xmax=4,
  %   ymin=0,
  %   ymax=0.6,
  %   axis y line=left,
  %   hide x axis,
  %   ylabel={$\Pr(h_1)$},
  %   xlabel={$h_1$},
  %   tick label style={
  %     font=\small
  %   },
  %   y dir=reverse,
  %   y label style={
  %     font=\small,
  %     % at={(current axis.below origin)},
  %     % anchor=south
  %   }
  %   ]
    
  %   \addplot[mark=none] table[x=h1,y=Prh1] {\twouserchannel};

  % \end{axis}

  % \begin{axis}[
  %   xshift=-40,
  %   width=3cm,
  %   height=8cm,
  %   ymin=0,
  %   ymax=1,
  %   xmin=0,
  %   axis x line=bottom,
  %   hide y axis,
  %   xlabel={$\Pr(h_2)$},
  %   ylabel={$h_2$},
  %   tick label style={
  %     font=\small
  %   },
  %   ]
    
  %   \addplot[mark=none] table[x=Prh2,y=h2] {\twouserchannel};

  % \end{axis}

\end{tikzpicture}
  \caption{Existence of a compute-and-forward rate tuple outside of the multiple access channel capacity region. $P/\sigma^2 = 5$dB.}
  %  including the probability for a channel realization, where $h_i \sim \mathcal{N}(0,1)$
  \label{fig:two-user-channel-realizations-5dB}
\end{figure}
\begin{figure}
  \centering
  \tikzsetnextfilename{two-user-channel-realizations-20dB}

\begin{tikzpicture}[scale=0.8,>=latex]

  \tikzpicturedependsonfile{./figures/two-user-channel-realizations-20dB.tikz}

  \pgfplotsset{
    colormap={custom01}{rgb(0)=(1,1,1); rgb(1)=(1,0,0); rgb(3)=(0,0,1)}
  }

  \pgfplotstableread{./figures/two-user-channel-realizations-20dB.csv}\twouserchannel
  
  \begin{axis} [
    xtick={1,2,...,4},
    xmax=4,
    xmin=0,
    width=8cm,
    legend style={
      cells={anchor=west},
      legend pos=north west,
    },
    tick label style={
      font=\small
    },
    view={0}{90},
    scatter,
    colorbar,
    colorbar style={
      colormap name=custom01,
      ylabel={Achievable secrecy rate $R_s$ [bit/cu]},
      ylabel style={yshift=-2.7cm},
    },
    % colorbar sampled line,
    ylabel={$h_2$},
    xlabel={$h_1$},
    % x label style={
    % at={(current axis.right of origin)},
    % anchor=north west  
    % }
    ]

    % \addplot3[only marks,mark options={mark size=1pt}] table[x=h1,y=h2,z=outside] {\twouserchannel};
    \addplot3[only marks,mark options={mark size=1pt}] table[x=h1,y=h2,z=Rs] {\twouserchannel};

  \end{axis}

  % \begin{axis}[
  %   ytick={0.1,0.2,...,0.5},
  %   yshift=-2.45cm,
  %   height=4cm,
  %   width=8cm,
  %   xmin=0,
  %   xmax=4,
  %   ymin=0,
  %   ymax=0.6,
  %   axis y line=left,
  %   hide x axis,
  %   ylabel={$\Pr(h_1)$},
  %   xlabel={$h_1$},
  %   tick label style={
  %     font=\small
  %   },
  %   y dir=reverse,
  %   y label style={
  %     font=\small,
  %     % at={(current axis.below origin)},
  %     % anchor=south
  %   }
  %   ]
    
  %   \addplot[mark=none] table[x=h1,y=Prh1] {\twouserchannel};

  % \end{axis}

  % \begin{axis}[
  %   xshift=-40,
  %   width=3cm,
  %   height=8cm,
  %   ymin=0,
  %   ymax=1,
  %   xmin=0,
  %   axis x line=bottom,
  %   hide y axis,
  %   xlabel={$\Pr(h_2)$},
  %   ylabel={$h_2$},
  %   tick label style={
  %     font=\small
  %   },
  %   ]
    
  %   \addplot[mark=none] table[x=Prh2,y=h2] {\twouserchannel};

  % \end{axis}

\end{tikzpicture}
  \caption{Existence of a compute-and-forward rate tuple outside of the multiple access channel capacity region. $P/\sigma^2 = 20$dB.}
  %  including the probability for a channel realization, where $h_i \sim \mathcal{N}(0,1)$
  \label{fig:two-user-channel-realizations-20dB}
\end{figure}
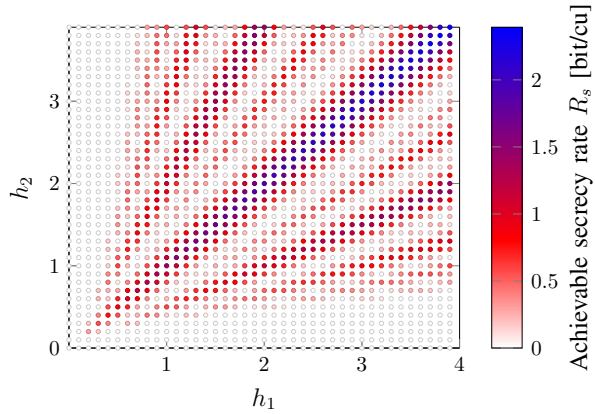
One can see that the achievable secrecy rate reaches its highest values if the channel coefficients are equal.
One can also see that small channel coefficient values do not achieve positive secrecy rates.
One can compensate this behavior in part by adjusting the power allocation such that the effective channel coefficients are close to each other. 
Unfortunately, when we assume that the channel coefficients are distributed by $h_i \sim \mathcal{N}(0,1)$, the small channel coefficients occur with higher probability.
The question arises how many channel realizations result in a positive secrecy rate when we draw the channel coefficients from a normal distribution with zero mean and unit variance.
The result for the \gls{SISO} case without optimized power allocation is show in Figure \ref{fig:channel-histogram-siso} where we used 10,000 i.i.d. channel realizations.
One can see that only in the two-way relay channel we achieve positive secrecy rates for a reasonable amount of channel realizations.
This might be a depressing result but we can do better by optimizing the transmit power and introducing multiple antennas.
The result for a \gls{SNR} of 5dB is shown in Figure \ref{fig:channel-histogram-all}.
One can see that alone by optimizing the power allocation in the \gls{SISO} case we increase the channel realizations resulting in a positive secrecy rate from 18.19\% to 52.1\%.
Introducing multiple antennas at the source nodes results in 100\% of positive secrecy rates.
Introducing multiple antennas at the relay is contra-beneficial because we give the eavesdropper more degrees of freedom.

\begin{figure}
  \centering
  \tikzsetnextfilename{l-user-channel-histogram}

\begin{tikzpicture}[>=latex]
  
  \tikzpicturedependsonfile{./figures/l-user-channel-histogram.tikz}

  \pgfplotstableread{./figures/l-user-channel-histogram-5dB.csv}\kuserchanneli
  \pgfplotstableread{./figures/l-user-channel-histogram-10dB.csv}\kuserchannelii
  \pgfplotstableread{./figures/l-user-channel-histogram-15dB.csv}\kuserchanneliii
  \pgfplotstableread{./figures/l-user-channel-histogram-20dB.csv}\kuserchanneliv

  \begin{axis}[
    enlarge y limits=0.3,
    xbar,
    bar width=5pt,
    xmin=0,
    xmax=65,
    ymin=2,
    ymax=4,
    height=5cm,
    width=8cm,
    ytick={2,3,4},
    legend style={
      cells={anchor=west},
      legend pos=north east,
    },
    ylabel={\small Number of users},
    xlabel={\small \% channel realizations with $R_s > 0$},
    nodes near coords={\tiny \pgfmathprintnumber{\pgfplotspointmeta}},
    nodes near coords align=horizontal,
    ]
    
    \addplot table[y=dimension,x=decodable_percent] {\kuserchanneli};
    \addlegendentry{\small $\tfrac{P}{\sigma^2} = 5$dB}

    \addplot table[y=dimension,x=decodable_percent] {\kuserchannelii};
    \addlegendentry{\small $\tfrac{P}{\sigma^2} = 10$dB}

    \addplot table[y=dimension,x=decodable_percent] {\kuserchanneliii};
    \addlegendentry{\small $\tfrac{P}{\sigma^2} = 15$dB}

    \addplot table[y=dimension,x=decodable_percent] {\kuserchanneliv};
    \addlegendentry{\small $\tfrac{P}{\sigma^2} = 20$dB}

  \end{axis}

\end{tikzpicture}
  \caption{Percentage of channel realizations resulting in positive secrecy rates for different numbers of users and \gls{SNR} values for the \gls{SISO} multi-way relay channel without optimal power allocation. 10,000 channel realizations are drawn from a normal distribution $\mathcal{N}(0,1)$.}
  \label{fig:channel-histogram-siso}
\end{figure}
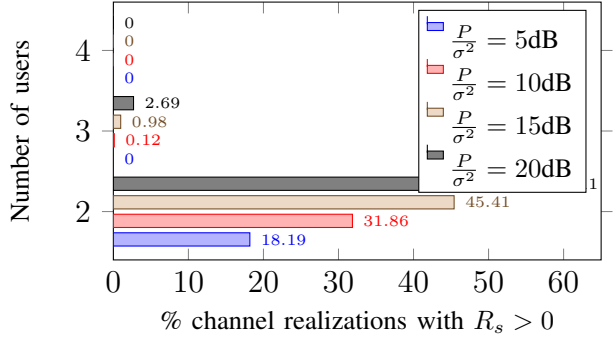

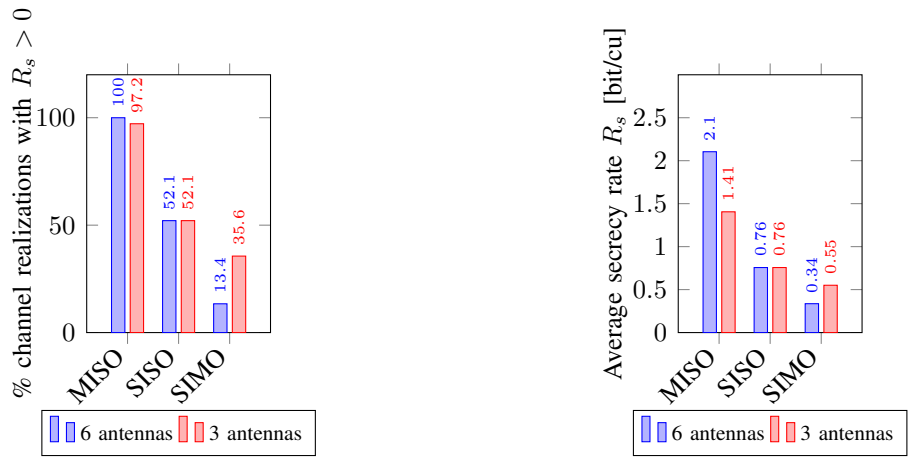
\begin{figure}
  \centering
  \tikzsetnextfilename{2-user-channel-histogram}

\begin{subfigure}[t]{0.47\linewidth}
  \centering
  \begin{tikzpicture}[>=latex]
    
    \tikzpicturedependsonfile{./figures/2-user-channel-histogram.tikz}

    \pgfplotstableread{./figures/2-user-channel-histogram-5dB-6-antennas.csv}\kuserchanneli
    \pgfplotstableread{./figures/2-user-channel-histogram-5dB-3-antennas.csv}\kuserchannelii
    % \pgfplotstableread{./figures/l-user-channel-histogram-15dB.csv}\kuserchanneliii
    % \pgfplotstableread{./figures/l-user-channel-histogram-20dB.csv}\kuserchanneliv

    \begin{axis}[
      enlarge x limits=0.4,
      enlarge y limits={upper,value=0.2},
      ybar,
      ymin=0,
      ymax=100,
      xtick={},
      xticklabels={},
      xtick={1,2,3},
      xticklabels={MISO,SISO,SIMO},
      x tick label style={
        font=\small,
        rotate=45,
        anchor=east,
      },
      y tick label style={
        font=\small,
      },
      bar width=5pt,
      height=5cm,
      width=4.0cm,
      legend style={
        at={(0.5,-0.3)},
        anchor=north,
        legend columns=-1,
        cells={
          anchor=west,
          font=\scriptsize,
        },
        % legend pos=north west,
      },
      ylabel={\small \% channel realizations with $R_s > 0$},
      ylabel style={yshift=-0.4cm},
      nodes near coords={\tiny \pgfmathprintnumber{\pgfplotspointmeta}},
      nodes near coords align=vertical,
      every node near coord/.append style={rotate=90, anchor=west}
      ]
      
      \addplot table[x=type,y=decodable_percent] {\kuserchanneli};
      \addplot table[x=type,y=decodable_percent] {\kuserchannelii};

      \legend{6 antennas, 3 antennas}

    \end{axis}

  \end{tikzpicture}
  \caption{Percentage of channel realizations resulting in positive secrecy rates}
\end{subfigure}
%
% \quad
\tikzsetnextfilename{2-user-channel-histogram-rate}
\begin{subfigure}[t]{0.47\linewidth}
  \centering
  \begin{tikzpicture}[>=latex]
    
    \tikzpicturedependsonfile{./figures/2-user-channel-histogram.tikz}

    \pgfplotstableread{./figures/2-user-channel-histogram-5dB-6-antennas.csv}\kuserchanneli
    \pgfplotstableread{./figures/2-user-channel-histogram-5dB-3-antennas.csv}\kuserchannelii

    \begin{axis}[
      enlarge x limits=0.4,
      ybar,
      ymin=0,
      ymax=3,
      ytick={0,0.5,...,2.5},
      xtick={1,2,3},
      xticklabels={MISO,SISO,SIMO},
      x tick label style={
        font=\small,
        rotate=45,
        anchor=east,
      },
      y tick label style={
        font=\small,
      },
      bar width=5pt,
      height=5cm,
      width=4.0cm,
      legend style={
        at={(0.5,-0.3)},
        anchor=north,
        legend columns=-1,
        cells={
          anchor=west,
          font=\scriptsize,
        },
      },
      ylabel={\small Average secrecy rate $R_s$ [bit/cu]},
      ylabel style={yshift=-0.4cm},
      nodes near coords={\tiny \pgfmathprintnumber{\pgfplotspointmeta}},
      nodes near coords align=vertical,
      every node near coord/.append style={rotate=90, anchor=west}
      ]
      
      \addplot table[x=type,y=avg_secrecy_rate] {\kuserchanneli};
      \addplot table[x=type,y=avg_secrecy_rate] {\kuserchannelii};

      \legend{6 antennas, 3 antennas}

    \end{axis}

  \end{tikzpicture}
  \caption{Average secrecy rate over all positive rates}
\end{subfigure}
  \caption{Percentage of channel realizations resulting in positive secrecy rates and the according average secrecy rates for $P/\sigma^2 = 5$dB. All schemes use optimal power allocation. 1000 channel realizations are drawn from a normal distribution $\mathcal{N}(0,1)$.}
  \label{fig:channel-histogram-all}
\end{figure}

\section{Conclusion}
\label{sec:conclusion}

In this paper we have presented an achievable secrecy rate region for the L-user relay channel where all nodes want to securely transmit messages via an untrusted relay.
We have shown that the proposed coding strategy, based on the compute-and-forward framework, supports simultaneous secure communications.
The proposed achievable secrecy rate is the difference between the \gls{MAC} sum capacity and the computation rate of the compute-and-forward framework.
We have provided a proof for the achievability of the secrecy rate region and have discussed this result.
We have seen that this scheme performs quit good in the 2-user case, commonly known as the two-way untrusted relay channel.
Therefore we investigated this scenario in detail and showed the dependency of the performance on the channel realizations.
We showed that by introducing multiple antennas at the source nodes and optimizing the transmit power we can significantly increase the secrecy rate.

\section*{Acknowledgment}

The authors would like to thank Martin Mittelbach from the Department of Electrical Engineering and Information Technology at Technische Universität Dresden for valuable discussions.

\appendix

\label{sec:proof-secrcecy-rate}

\begin{table}
  \centering
  \caption{Overview of variables and symbols}
  \begin{tabularx}{\linewidth}{llX}
    \toprule
    variable & distribution & comment\\
    \hline
    $W_i$ & $\sim \mathcal{U}(\{1,2,\dots,\lceil 2^{nR_s} \rceil \})$ & secret message of length $k$ \\
    % $T_i$ & $\sim \mathcal{U}(\Lambda_F \cap \mathcal{V}_B)$ & encoded message in $\mathcal{L}$\\
    % $V_i$ & $\sim \mathcal{U}(\Lambda_B \cap \mathcal{V}_C)$ & defines the coset of $\mathcal{L}$\\
    $U_i$ & $\sim \mathcal{U}(\mathcal{V}_C)$ & dither at node $i$\\
    $X_i$ & $\sim \mathcal{U}(\mathcal{V}_C)$ & transmit vector at node $i$\\
    $Z_j$ & $\sim \mathcal{N}(0,\idmat{n})$ & AWGN at relay antenna $j$\\
    $Y_j$ & continuous & received vector at relay antenna $j$\\
    \bottomrule
  \end{tabularx}
  \label{tab:variable-overview}
\end{table}

In this section we provide the proof of Theorem \ref{theorem:secrecy-rate-region-simo-l-user}.
An overview over all random variables is provided in Table \ref{tab:variable-overview}.

\begin{proof}
  For the achievability of the secrecy rate region we must show that the following weak secrecy condition holds:
  \begin{align}
    \label{eq:weak-secrecy-condition-simo-l-user}
    \lim_{n \to \infty} \tfrac{1}{n} I(\mathcal{W}; \mathcal{Y} \mid U_1,\dots,U_L) &= 0,
  \end{align}
  for all $\mathcal{W} \in \mathcal{P}(\{W_1,\dots,W_L\})$ and all $\mathcal{Y} \in \mathcal{P}(\{Y_1,\dots,Y_{\eta_R}\})$ where $\mathcal{P}(X)$ is the power set of $X$.
  For the ease of readability we omit the condition on $U_1,\dots,U_L$ in the following since the dither is present and known in every mutual information and entropy and does not change the equations. 
  By using the chain rule for mutual information we see that%
  \footnote{Note that for $\ell = L: I(W_L ; Y_1,\dots,Y_{\eta_R} \mid W_1,\dots,W_{L-1}) \neq 0$ because of the random binning.}
  \begin{align}
    &I(W_1,\dots,W_L; Y_1,\dots,Y_{\eta_R}) - I(W_1,\dots,W_m; Y_1,\dots,Y_{\eta_R})\nonumber\\
    &\hspace*{1cm}= \sum_{\ell=m}^L I(W_\ell; Y_1,\dots,Y_{\eta_R} \mid W_1,\dots,W_{\ell-1}).
  \end{align}
  Because of the non-negativity of mutual information we have
  \begin{align}
    &I(W_1,\dots,W_L; Y_1,\dots,Y_{\eta_R})\nonumber\\ 
    &\hspace*{1cm}- I(W_1,\dots,W_m; Y_1,\dots,Y_{\eta_R}) \geq 0
  \end{align}
  and therefore 
  \begin{align}
    &I(W_1,\dots,W_L; Y_1,\dots,Y_{\eta_R})\nonumber\\
    &\hspace*{1cm}\geq I(W_1,\dots,W_m; Y_1,\dots,Y_{\eta_R}),~\forall m < L.
  \end{align}
  Using the symmetry of mutual information and the same arguments as above, we can show that
  \begin{align}
    &I(W_1,\dots,W_L; Y_1,\dots,Y_{\eta_R})\nonumber\\
    &\hspace*{1cm}\geq I(W_1,\dots,W_L; Y_1,\dots,Y_m),~\forall m < \eta_R.
  \end{align}
  Hence, it is sufficient to show that 
  \begin{equation}  
    \lim_{n \to \infty} \tfrac{1}{n} I(W_1,\dots,W_L; Y_1,\dots,Y_{\eta_R}) = 0
  \end{equation}
  which implies \eqref{eq:weak-secrecy-condition-simo-l-user}.
  This condition is equivalent to
  \begin{equation}
    \begin{split}
      &\lim_{n \to \infty} \tfrac{1}{n} H(W_1,\dots,W_L)\\
      \leq &\lim_{n \to \infty} \tfrac{1}{n} H(W_1,\dots,W_L \mid Y_1,\dots,Y_{\eta_R}).
    \end{split}
  \end{equation}
  We can explicitly write the left hand side and get
  \begin{equation}
    L \cdot R_{s} \leq \lim_{n \to \infty} \tfrac{1}{n} H(W_1,\dots,W_L \mid Y_1,\dots,Y_{\eta_R}).
  \end{equation}
  We now need a lower bound on the right hand side.
  \begin{align}
    & \hspace*{-4mm} \lim_{n \to \infty} \tfrac{1}{n} H(W_1,\dots,W_L \mid Y_1,\dots,Y_{\eta_R}) \\
    &= \lim_{n \to \infty} \tfrac{1}{n} [ H(W_1,\dots,W_L \mid X_1,\dots,X_L,Y_1,\dots,Y_{\eta_R}) \nonumber \\
    & \hspace*{11mm} + H(X_1,\dots,X_L \mid Y_1,\dots,Y_{\eta_R}) \\
    & \hspace*{11mm} - H(X_1,\dots,X_L \mid W_1,\dots,W_L,Y_1,\dots,Y_{\eta_R}) ] \nonumber \\
    &\overset{a)}{\geq} \lim_{n \to \infty} \tfrac{1}{n} [ H(X_1,\dots,X_L \mid Y_1,\dots,Y_{\eta_R}) - n \delta(n) ] \\
    &= \lim_{n \to \infty} \tfrac{1}{n} [ H(X_1,\dots,X_L \mid Y_1,\dots,Y_{\eta_R}) \\
    & \hspace*{11mm} - H(X_1,\dots,X_L) + H(X_1,\dots,X_L) - n \delta(n) ] \nonumber \\
    &= \lim_{n \to \infty} \tfrac{1}{n} [ H(X_1,\dots,X_L) \\
    & \hspace*{11mm} - I(X_1,\dots,X_L ; Y_1,\dots,Y_{\eta_R}) - n \delta(n) ] \nonumber \\
    &\overset{b)}{\geq} L \cdot (R_s + R_d) - \tfrac{1}{2} \log_2\det(\idmat{\eta_R} + PH\transpose{H})
  \end{align}
  We have used the following arguments:
  \begin{enumerate}[a)]
  \item We used Fano's inequality to bound the last term.
    This is because the size of each bin is kept small enough such that given $W_1,\dots,W_L$, the eavesdropper can determine $X_1,\dots,X_L$ from the received signals.
  \item We note that the term $I(X_1,\dots,X_L; Y_1,\dots,Y_{\eta_R})$ corresponds to the mutual information of a \gls{MIMO} channel.
    We rewrite the mutual information in terms of entropy, i.e.,
    \begin{align}
      &I(X_1,\dots,X_L; Y_1,\dots,Y_{\eta_R})\nonumber\\
      &= h(Y_1,\dots,Y_{\eta_R}) - h(Y_1,\dots,Y_{\eta_R} \mid X_1,\dots,X_L).
    \end{align}    
    % We note that this corresponds to a multiple-input multiple output (MIMO) channel and the capacity for a Gaussian channel is given for example in \cite{Telatar1999}.
    % For completeness we provide the basic steps in the following. 
    We use the fact that the normal distribution maximizes the entropy for an average power constraint to get an upper bound on the first term.
    Furthermore, from \cite[Section 3.2]{Telatar1999} we know that if $X$ is distributed according to a normal distribution with zero-mean and covariance $E[x\transpose{x}] = P \idmat{L}$ than $Y = H X + Z$ is also distributed according to a normal distribution with zero-mean and covarianz $E[y\transpose{y}] = P H\transpose{H} + \idmat{\eta_R}$.
    \begin{align*}
      h(Y_1,\dots,Y_{\eta_R}) &= h(Y_{1,1},\dots,Y_{1,n},\dots,Y_{\eta_R,1},\dots,Y_{\eta_R,n})\\
      &= h(Y_{1,1},\dots,Y_{\eta_R,1},\dots,Y_{1,n},\dots,Y_{\eta_R,n})\\
      %&= \sum_{i=1}^n h(Y_{1,i},\dots,Y_{\eta_R,i} \mid Y_{1,i-1},\dots,Y_{\eta_R,i-1},Y_{1,i-2},\dots,Y_{\eta_R,i-2},\dots,Y_{1,1},\dots,Y_{\eta_R,1})\\
      &\leq \sum_{i=1}^n h(Y_{1,i},\dots,Y_{\eta_R,i})\\
      &\leq n \cdot \tfrac{1}{2} \log_2 ((2 \pi e)^L \det (P H \transpose{H} + \idmat{\eta_R})).
    \end{align*}
    The only uncertainty in the received signals $Y_1,\dots,Y_{\eta_R}$, if the transmitted signals $X_1,\dots,X_L$ are given, comes from the noise $Z_1,\dots,Z_{\eta_R}$.
    The noise is i.i.d. with respect to a normal distribution which results in the following entropy: 
    \begin{align*}
      &h(Y_1,\dots,Y_{\eta_R} \mid X_1,\dots,X_L)\\
      &\hspace*{1cm}= h(Z_1,\dots,Z_{\eta_R})\\
      &\hspace*{1cm}= \sum_{i=1}^n h(Z_{1,i},\dots,Z_{\eta_R,i})\\
      &\hspace*{1cm}= n \cdot h(Z_{1,i},\dots,Z_{\eta_R,i})\\
      &\hspace*{1cm}= n \cdot \tfrac{1}{2} \log_2 ((2 \pi e)^L \det(\idmat{\eta_R})).
    \end{align*}
    Putting everything together results in
    \begin{align*}
      &I(X_1,\dots,X_L; Y_1,\dots,Y_{\eta_R})\\
      &\hspace*{1cm}\leq n \cdot \tfrac{1}{2} \log_2 ((2 \pi e)^L \det (P H \transpose{H} + \idmat{\eta_R}))\\
      &\hspace*{2cm}- n \cdot \tfrac{1}{2} \log_2 ((2 \pi e)^L \det(\idmat{\eta_R}))\\
      &\hspace*{1cm}= n \cdot \tfrac{1}{2} \log_2 \frac{(2 \pi e)^L \det (P H \transpose{H} + \idmat{\eta_R})}{(2 \pi e)^L \det(\idmat{\eta_R})}\\
      &\hspace*{1cm}= n \cdot \tfrac{1}{2} \log_2 \det( \idmat{\eta_R} + P H \transpose{H} ).
    \end{align*}
  \end{enumerate}
  Since all rates are symmetric and $R_s + R_d \leq R_{\text{CF}}$ we get the following weak secrecy rate
  \begin{equation}
    L \cdot R_s \leq L \cdot R_{\text{CF}} - \tfrac{1}{2} \log_2 \det( \idmat{\eta_R} + P H \transpose{H} ).
  \end{equation}
  This concludes the proof.
\end{proof}

\bibliographystyle{IEEEtran}
\bibliography{bibliography/references.bib}

\end{document}